\newtheorem{theorem}{Theorem}
\newtheorem{lemma}[theorem]{Lemma}
\newtheorem*{lemma*}{Lemma ($*$)}
\newtheorem{proposition}[theorem]{Proposition}
\newtheorem{definition}{Definition}
\providecommand{\customgenericname}{}
\newcommand{\newcustomtheorem}[2]{%
  \newenvironment{#1}[1]
  {%
   \renewcommand\customgenericname{#2}%
   \renewcommand\theinnercustomgeneric{##1}%
   \innercustomgeneric
  }
  {\endinnercustomgeneric}
}
\title{Closeness Centrality via the Condorcet Principle}
\author{Oskar Skibski\\\normalsize University of Warsaw}
\begin{document}

\maketitle

\begin{abstract}
We uncover a new relation between Closeness centrality and the Condorcet principle.
We define a Condorcet winner in a graph as a node that compared to any other node is closer to more nodes.
In other words, if we assume that nodes vote on a closer candidate, a Condorcet winner would win a two-candidate election against any other node in a plurality vote.
We show that Closeness centrality and its random-walk version, Random-Walk Closeness centrality, are the only classic centrality measures that are Condorcet consistent on trees, i.e., if a Condorcet winner exists, they rank it first.
While they are not Condorcet consistent in general graphs, we show that Closeness centrality satisfies the Condorcet Comparison property that states that out of two adjacent nodes, the one preferred by more nodes has higher centrality.
We show that Closeness centrality is the only regular distance-based centrality with such a property.
\end{abstract}

\section{Introduction}
\emph{Closeness centrality} proposed by \citet{Bavelas:1950} was one of the first methods in the literature designed to evaluate the position of a node in the network.
Defined as the inverse of the sum of distances to other nodes, it was originally used to study the role of identities in group collaboration.
This is when the centrality analysis originated and the term ``centrality'' has been coined as an indicator of the \emph{central} position in the network.

Over the last 70 years, network analysis has experienced tremendous growth during which the term ``centrality'' has been revisited and redefined multiple times.
Notably, Betweenness centrality and other medial centralities have been proposed to capture not the central position but the control over the communication between other nodes~\cite{Freeman:1977,Freeman:etal:1991}.
Furthermore, PageRank and other feedback centralities focused more on the number, and the importance of the directly connected nodes rather than on being close to all the nodes~\cite{Page:etal:1999,Bonacich:1987}.
In the last two decades, dozens of new centrality concepts have been proposed, drifting even further from the origin of their name~\cite{Jalili:etal:2015}.
Nevertheless, Closeness centrality has remained one of the key and the most commonly used measures of centrality.

Many works in the literature analyzed various properties of Closeness centrality, both from the theoretical~\cite{Brandes:etal:2016,Borgatti:Everett:2006}
and empirical perspective~\cite{Powell:etal:1996,Rowley:1997}.
Interestingly, in this paper, we uncover a new, not yet identified relation between Closeness centrality and social choice theory.

We interpret a graph as an election in which nodes are both candidates and voters.
The voters' preferences are defined based on their distance to other nodes.
Specifically, out of two nodes (candidates), the node (voter) prefers the closer one.
Now, a Condorcet winner in a graph is a node that compared to any other node is closer to more nodes.
In other words, a Condorcet winner is a node that would win a two-candidate election against any other node in a plurality vote.


We obtain several results for this setting.
We begin with the analysis of trees and prove that Closeness centrality is Condorcet consistent in trees, i.e., a Condorcet winner, if it exists, is ranked first by Closeness centrality.
Out of standard centrality measures, only Closeness and its random-walk version Random-Walk Closeness have this property.

Next, we focus on arbitrary graphs.
As it turns out, Closeness centrality and other centralities based solely on the distances to other nodes are not Condorcet consistent.
However, we show that Closeness centrality satisfies the property that we name \emph{Condorcet Comparison}: out of two adjacent nodes, the one preferred by more nodes has higher centrality.
We show that Closeness centrality is the only regular distance-based centrality with such a property.
More precisely, we show that if a regular distance-based centrality satisfies Condorcet Comparison, then it returns the same ranking as Closeness.

Our work sheds new light on centrality analysis from the social choice perspective.
In particular, our results contribute to the discussion on extending Closeness centrality to arbitrary graphs. 
Specifically, Closeness centrality is well-defined only for connected graphs; hence several distance-based centralities, such as Harmonic and Decay centralities, are often advocated as good alternatives. 
Our analysis, however, provides evidence that, more often than not, they behave significantly different than Closeness centrality.

While we focus only on Closeness centrality and Condorcet principle, our work opens up many interesting future directions.
We discuss them in Conclusions.


\section{Preliminaries}

In this paper, we consider undirected, unweighted graphs.

A \emph{graph} is a pair $G = (V,E)$ where $V$ is the set of nodes and $E$ is the set of edges, i.e., pairs $\{u,v\} \subseteq V$.
We assume that $|V|=n$.

A (simple) \emph{path} is a sequence of different nodes $(v_0, \dots, v_k)$ such that $\{v_i, v_{i+1}\} \in E$ for every $i \in \{0,\dots, k-1\}$.
We say that this path \emph{starts} in node $v_0$, ends in $v_k$ and has length $k$.
The \emph{distance} between $u$ and $v$, denoted by $d(u,v)$, is the length of a shortest path that starts in $u$ and ends in $v$.
We assume that $d(v,v)=0$ for every node $v$ and $d(u,v) = +\infty$ if there is no path between $u$ and $v$.

Node $u$ is called a \emph{neighbor} of node $v$ if there is an edge between them: $\{u,v\} \in E$.
The number of neighbors of $v$ is called its \emph{degree} and denoted by $\deg(v)$.

The \emph{list of distances} of node $v$ is a list of positive natural numbers $A(v) = (a_1,\dots,a_k)$ such that $a_i = |\{u \in V : d(u,v) = i\}|$ and $k = \max_u d(u,v)$.
To put it in words, $a_i$ is the number of nodes at distance $i$.
For example, for node $v$ from Figure~\ref{fig:preliminaries} we have $A(v) = (2,2,4,4)$.

A graph is a \emph{tree} if there exists exactly one path between any two nodes.
If there is at least one path between any two nodes, then we say that the graph is \emph{connected}.
A \emph{connected component} of a graph is a subset of nodes that contains all neighbors of all nodes that belong to it.
The set of connected components of a graph is denoted by $K(G)$.
In particular, if $G$ is connected, then $K(G) = \{V\}$.

An edge $\{u,v\}$ is a \emph{bridge} in a connected graph $G$ if its removal makes the graph disconnected: $|K(G-e)| > 1$.
Here, $G-e$ denotes the graph $(V, E \setminus \{e\})$.
A \emph{subgraph induced} by $S \subseteq V$ is denoted by $G[S]$ and defined as follows:
\[ G[S] = (S, E[S]) = (S, \{\{u,v\} \in E : u,v \in S\}). \]

A \emph{random walk} on a graph $G = (V,E)$ is a random sequence of nodes $(v_0, v_1, \dots)$ characterized by a starting node ($v_0$) and the transition probability:
\[ \mathrm{Pr}[v_{t+1} = w \mid v_t = u] = \begin{cases}
1/\deg(u) & \mbox{ if $\{u,w\} \in E$,} \\
0 & \mbox{otherwise.}
\end{cases} \]
To put it in words, in every step, the random walk picks a random edge incident to the node it is in and follows it to the destination of that edge.

For two nodes $u,v \in V$, the \emph{hitting time} $H(u,v)$ is the expected number of steps of the random walk starting in node $u$ before node $v$ is visited.
We assume $H(v,v) = 0$ for every $v \in V$.
Similarly, the \emph{expected return time} of node $v$ is the expected number of steps of the random walk starting in node $v$ before node $v$ is visited again.
It is known that the expected return time of node $v$ in a connected graph $G = (V,E)$ equals $2|E|/\deg(v)$~\cite{Lovasz:1993}.

\subsection{Centrality Measures}
A \emph{centrality measure} is a function that assesses the importance of nodes in a graph. 
Specifically, centrality measure $F$ in every graph $G = (V,E)$ to every node $v \in V$ assigns a real value, denoted by $F_v(G)$.
In this paper, we focus on Closeness centrality \cite{Bavelas:1950}.

\begin{definition}
\emph{Closeness centrality} of node $v$ in a connected graph $G = (V,E)$ is defined as follows:
\[ C_v(G) = \Bigg(\sum_{u \in V \setminus \{v\}} d(u,v)\Bigg)^{-1}. \]
\end{definition}
Closeness centrality is sometimes normalized by multiplying by $n-1$. 
Written like this, it is equal to the inverse of the average distance to other nodes in a graph.

A centrality measure $F$ is \emph{distance-based} if it depends only on the list of distances of a node: $F_v(G) = f(A(v))$ for some function $f$.
Closeness centrality is distance-based with the function: $f(a_1,\dots,a_k) = (\sum_{i=1}^{k} a_i i)^{-1}$.
For other distance-based centralities we have:
\begin{itemize}
\item \emph{Degree centrality}~\cite{Freeman:1977}: $f(a_1,\dots,a_k) = a_1$;
\item \emph{Harmonic centrality}~\cite{Rochat:2009}: $f(a_1,\dots,a_k) = \sum_{i=1}^k (a_i/i)$; and
\item \emph{Decay centrality}~\cite{Jackson:2008}: $f(a_1,\dots,a_k) = \sum_{i=1}^k (a_i \delta^i)$ for some $\delta \in (0,1)$.
\end{itemize}
Harmonic and Decay centralities are well-defined also for disconnected graphs under the assumption that $1/\infty = \delta^{\infty} = 0$ for every $\delta \in (0,1)$.
Hence, they are often referred to as an extension of Closeness centrality to arbitrary graphs.

Closeness centrality focuses on shortest paths.
An alternative based on the random walk was proposed by \citet{White:Smyth:2003}. 
Here, instead of the distance between nodes, we look at the expected number of steps performed by the random walk, i.e., the hitting time.
\begin{definition}
\emph{Random-Walk Closeness (RW-Closeness) centrality} of node $v$ in a connected graph $G = (V,E)$ is:
\[ RWC_v(G) = \Bigg(\sum_{u \in V \setminus \{v\}} H(u,v) \Bigg)^{-1}. \]
\end{definition}
This centrality was originally named \emph{Markov centrality}, but we will use the more informative name proposed by \citet{Brandes:Erlebach:2005}.
We also note that the expected return time of node $v$ is sometimes added to the sum which spoils most of the properties that we will discuss.

Let us denote by $\mathrm{Top}_F(G)$ the set of nodes which have the highest values in graph $G$ according to centrality $F$: 
$\mathrm{Top}_F(G) = \{v \in V : F_v(G) \ge F_u(G) \mbox{ for every }u \in V\}$.
While all centrality measures listed in this section are similar, they often differ in the nodes they rank first.
An example of that is presented in Figure~\ref{fig:preliminaries}.

\begin{figure}[b]
\centering
\begin{tikzpicture}[x=8cm,y=6cm] 
  \tikzset{     
    e4c node/.style={circle,draw,minimum size=0.50cm,inner sep=0}, 
    e4c edge/.style={sloped,above,font=\footnotesize}
  }
  
  \node[e4c node,minimum size=0.58cm] (6) at (0.62, 0.00) {$v$};
 
  \node[e4c node] (2) at (0.00, 0.00) {}; 
  \node[e4c node] (1) at (0.10, 0.11) {}; 
  \node[e4c node] (3) at (0.10, -0.11) {}; 
  \node[e4c node] (4) at (0.20, 0.00) {$x$}; 
  \node[e4c node] (45) at (0.34, 0.00) {}; 
  \node[e4c node] (5) at (0.48, 0.00) {$u$}; 
  \node[e4c node] (6) at (0.62, 0.00) {$v$}; 
  \node[e4c node] (67) at (0.76, 0.00) {$w$}; 
  \node[e4c node] (7) at (0.90, 0.00) {$y$}; 
  \node[e4c node] (8) at (0.90, 0.11) {}; 
  \node[e4c node] (9) at (0.90, -0.11) {}; 
  \node[e4c node] (10) at (1.04, 0.00) {}; 
  \node[e4c node] (11) at (1.18, 0.00) {}; 

  \path[draw,thick]
  (1) edge[e4c edge]  (2)
  (3) edge[e4c edge]  (1)
  (2) edge[e4c edge]  (3)
  (1) edge[e4c edge]  (4)
  (2) edge[e4c edge]  (4)
  (3) edge[e4c edge]  (4)
  (45) edge[e4c edge]  (5)
  (9) edge[e4c edge]  (7)
  (11) edge[e4c edge]  (10)
  (67) edge[e4c edge]  (7)
  (6) edge[e4c edge]  (67)
  (5) edge[e4c edge]  (6)
  (7) edge[e4c edge]  (10)
  (7) edge[e4c edge]  (8)
  (45) edge[e4c edge]  (4)
  ;
\end{tikzpicture}
\caption{Nodes $u,v,w,y$ are ranked highest by RW-Closeness, Closeness, Decay ($\delta=0.8$) and Harmonic centralities, respectively. Nodes $x$ and $y$ have the highest degree.}
\label{fig:preliminaries}
\end{figure}
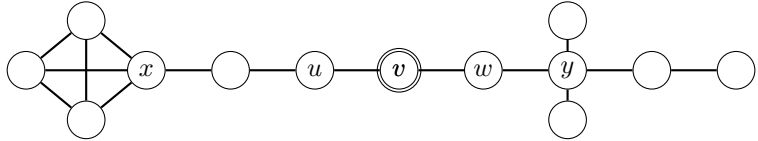

\section{Condorcet in Graphs}
In this paper, we will interpret a graph as an \emph{election} in which nodes are both candidates and voters.

The references of voters will be defined based on their distance to other nodes.
We say that node $w$ prefers node $u$ to node $v$ if $d(w,u) < d(w,v)$.
The number of nodes that prefers $u$ to $v$ will be denoted by $Net(u,v)$:
\[ Net(u,v) = |\{w \in V : d(w,u) < d(w,v)\}| \]
We will write $u \succ v$ if $Net(u,v) > Net(v,u)$ and $u \sim v$ if $Net(u,v) = Net(v,u)$.
Also, $u \succeq v$ if $u \succ v$ or $u \sim v$

A node $u$ is a \emph{Condorcet winner} if for every node $v$ ($v \neq u$) the number of nodes that prefers $u$ to $v$ is larger than the number of nodes that prefers $v$ to $u$, i.e., $u \succ v$.
In other words, node $u$ has more supporters in the head-to-head comparison with any other node.

We will denote a Condorcet winner on figures by the double line.
For example, in Figure~\ref{fig:preliminaries} node $v$ is a Condorcet winner: it divides the graph into two parts with $6$ nodes, hence for any other node $s$ it holds $Net(v,s) \ge 7 > Net(s,v)$.

In general, a Condorcet winner may not exist.
A \emph{Condorcet cycle} is a sequence $(v_1,\dots,v_k)$ such that $v_i \succ v_{i+1}$ for every $i \in \{1,\dots,k-1\}$ and $v_k \succ v_1$.
See Figure~\ref{fig:cycle} for an example.

Finally, we will say that a centrality measure is \emph{Condorcet consistent} if a Condorcet winner has the highest centrality in a graph if such a winner exists at all.

\begin{definition} (Condorcet Consistency)
A centrality measure $F$ is \emph{Condorcet consistent} if for every graph $G$ with a Condorcet winner $u$ it holds $\mathrm{Top}_F(G) = \{u\}$.
\end{definition}

To date, only \citet{Telek:2017} considered this model.
He proved that in trees Condorcet winner often exists (in particular, if there is an odd number of nodes). However, if a Condorcet winner does not exist, then there exist two nodes $u,v$ such that $\{u,v\} \in E$, $u \sim v$ and $u \succ w$, $v \succ w$ for every $w \in V \setminus \{u,v\}$.
We will call these two nodes \emph{weak Condorcet winners}.

\citet{Telek:2017} also characterized two other simple classes of graphs on which Condorcet winner exists. 
However, the only link to centrality measures in his work is the observation that Betweenness and Eigenvector centralities are not Condorcet consistent.

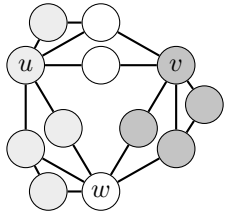
\begin{figure}[b]
\centering
\begin{tikzpicture}[x=2.5cm,y=2.1cm] 
  \tikzset{     
    e4c node/.style={circle,draw,minimum size=0.50cm,inner sep=0}, 
    e4c edge/.style={sloped,above,font=\footnotesize}
  }
  
  \node[e4c node,fill=black!23] (1) at (-0.20, 1.00) {$v$}; 
  \node[e4c node,fill=black!07] (2) at (-1.00, 1.00) {$u$}; 
  \node[e4c node] (3) at (-0.60, 0.20) {$w$}; 
  \node[e4c node] (4) at (-0.60, 1.00) {}; 
  \node[e4c node,fill=black!07] (5) at (-0.80, 0.60) {}; 
  \node[e4c node,fill=black!23] (6) at (-0.40, 0.60) {}; 
  \node[e4c node] (7) at (-0.60, 1.27) {}; 
  \node[e4c node,fill=black!07] (8) at (-0.88, 1.27) {}; 
  \node[e4c node,fill=black!07] (9) at (-1.00, 0.47) {}; 
  \node[e4c node,fill=black!07] (10) at (-0.88, 0.20) {}; 
  \node[e4c node,fill=black!23] (11) at (-0.20, 0.47) {}; 
  \node[e4c node,fill=black!23] (12) at (-0.05, 0.75) {}; 


  \path[draw,thick]
  (1) edge[e4c edge]  (4)
  (4) edge[e4c edge]  (2)
  (2) edge[e4c edge]  (5)
  (5) edge[e4c edge]  (3)
  (3) edge[e4c edge]  (6)
  (6) edge[e4c edge]  (1)
  (1) edge[e4c edge]  (7)
  (7) edge[e4c edge]  (8)
  (8) edge[e4c edge]  (2)
  (7) edge[e4c edge]  (2)
  (2) edge[e4c edge]  (9)
  (9) edge[e4c edge]  (10)
  (10) edge[e4c edge]  (3)
  (9) edge[e4c edge]  (3)
  (3) edge[e4c edge]  (11)
  (11) edge[e4c edge]  (12)
  (12) edge[e4c edge]  (1)
  (11) edge[e4c edge]  (1)
  ;
\end{tikzpicture}
\caption{An example of a Condorcet cycle $(u,v,w)$. Light gray nodes are closer to $u$ than to $v$ ($Net(u,v) = 5$) and dark gray nodes are closer to $v$ than to $u$ ($Net(v,u) = 4$). By symmetry, we get that $u \succ v$, $v \succ w$ and $w \succ u$.}
\label{fig:cycle}
\end{figure}

\section{Closeness in Trees}

We begin with the analysis of trees.

The crucial role in our analysis will be played by the axiom that we term \emph{Condorcet Comparison}.
This axiom states that the ranking between two adjacent nodes is determined by their head-to-head comparison.
In this section, we present its version restricted to trees.

\begin{definition} (Condorcet Comparison in Trees (CCT)) 
For every tree $G = (V,E)$ and edge $\{u,v\} \in E$:
\[ u \succeq v \Leftrightarrow F_u(G) \ge F_v(G). \]
\end{definition}

Note that this definition implies also that $u \succ v$ iff $F_u(G) > F_v(G)$ and $u \sim v$ iff $F_u(G) = F_v(G)$.

CCT is related to the \emph{Bridge} axiom proposed by \citet{Skibski:Sosnowska:2018:distance}.
Bridge states that if $\{u,v\}$ is a bridge in a graph connecting two sets of nodes $S_u$ and $S_v$ s.t. $u \in S_u$ and $v \in S_v$, then the node from the larger set has larger centrality.
Now, if we consider the graph as an election, we see that all nodes from $S_u$ prefers $u$ to $v$ and all nodes from $S_v$ prefer $v$ to $u$.
Hence, the condition $|S_u| \ge |S_v|$ is equivalent to $u \succeq v$ and CCT is equivalent to Bridge for trees.

It is easy to verify that Closeness centrality satisfies CCT.
Following the above notation, assume $K(G-\{u,v\}) = \{S_u,S_v\}$ s.t. $u \in S_u$ and $v \in S_v$.
Node $u$ has distance larger by one than $v$ to nodes from $S_v$ and $v$ has distance larger by one than $u$ to nodes from $S_u$. 
Hence:
\begin{equation}\label{eq:closeness_in_trees}
C_v^{-1}(G)-C_u^{-1}(G) = |S_u|-|S_v|.
\end{equation}

The following lemma shows that satisfying CCT is enough to imply Condorcet consistency on trees.
Specifically, a centrality measure that satisfies CCT ranks first a Condorcet winner or, if it does not exist, the two weak Condorcet winners.

\begin{lemma}\label{lemma:trees_axiom_condorcet}
If a centrality measure $F$ satisfies CCT, then for every tree $G$ and nodes $u,v \in \mathrm{Top}_F(G)$ and $w \not \in \mathrm{Top}_F(G)$ it holds $u \sim v$ and $u \succ w$.
Hence, $F$ is Condorcet consistent on trees.
\end{lemma}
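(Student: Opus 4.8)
The plan is to convert CCT into a purely combinatorial statement about subtree sizes, then locate $\mathrm{Top}_F(G)$, and finally read off the head-to-head relations directly from the tree structure. First I would record the translation for adjacent nodes. If $\{x,y\}\in E$ and $K(G-\{x,y\})=\{S_x,S_y\}$ with $x\in S_x$ and $y\in S_y$, then in a tree every node of $S_x$ is strictly closer to $x$ than to $y$, since its path to $y$ must cross the edge; hence $Net(x,y)=|S_x|$ and $Net(y,x)=|S_y|$. As $|S_x|+|S_y|=n$, we get $x\succeq y\Leftrightarrow |S_x|\ge n/2$, and CCT then reads $F_x\ge F_y\Leftrightarrow |S_x|\ge |S_y|$. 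In words, along every edge $F$ is at least as large on the side of the heavier component.

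Next I would use this to pin down $\mathrm{Top}_F(G)$. Call $c$ a \emph{centroid} if every connected component obtained by deleting $c$ and its incident edges has at most $n/2$ nodes. If $w$ is not a centroid, it has a neighbour $u$ whose side (after deleting $\{u,w\}$) exceeds $n/2$, so $u\succ w$ and $F_u>F_w$; hence $w\notin\mathrm{Top}_F(G)$. Conversely, walking from any node toward a centroid $c$ along the unique connecting path, each traversed edge has its heavier side toward $c$ (the lighter side is contained in a single branch of $c$, of size at most $n/2$), so $F$ is non-decreasing along the path and $F_c$ is maximal. Thus $\mathrm{Top}_F(G)$ is exactly the set of centroids, which in a tree is either a single node or two adjacent nodes splitting $G$ into two halves of size $n/2$.

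The core step is the head-to-head comparison for non-adjacent pairs. Here I would partition $V$ according to the closest vertex on the path $a=x_0,\dots,x_m=b$: a node prefers $a$ to $b$ iff its attachment point $x_j$ satisfies $2j<m$, and the nodes with $j\ge 1$ form exactly the branch of $a$ pointing toward $b$. For a unique centroid $c$ and any $w\neq c$, this branch has some size $s<n/2$, so $Net(w,c)\le s<n/2<n-s\le Net(c,w)$, giving $c\succ w$. For two centroids $c_1,c_2$ the adjacent formula gives $c_1\sim c_2$; for any other $w$, say on $c_1$'s side $S_{c_1}$, the branch of $c_1$ toward $w$ lies inside $S_{c_1}\setminus\{c_1\}$ and so has size below $n/2$, yielding $c_1\succ w$, while every node of $S_{c_2}$ prefers $c_2$ to $w$ and $c_1$ does not prefer $w$, so $Net(c_2,w)\ge n/2>Net(w,c_2)$ and $c_2\succ w$. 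This establishes $u\sim v$ for $u,v\in\mathrm{Top}_F(G)$ and $u\succ w$ for $w\notin\mathrm{Top}_F(G)$. Condorcet consistency is then immediate: if a Condorcet winner $z$ lay outside $\mathrm{Top}_F(G)$, any $u\in\mathrm{Top}_F(G)$ would satisfy $u\succ z$, contradicting $z\succ u$; and a second top node $v\neq z$ would give $z\sim v$, contradicting $z\succ v$; hence $\mathrm{Top}_F(G)=\{z\}$.

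I expect the main obstacle to be precisely the step that goes beyond what CCT supplies. CCT pins down the order only between neighbours, so the strict head-to-head $u\succ w$ for non-adjacent $u,w$ must be certified from the tree itself; the delicate points are the attachment-point bookkeeping on the $a$–$b$ path and the careful distinction between the one-centroid case (all branches strictly below $n/2$) and the two-centroid case (one branch of size exactly $n/2$, where strictness must be recovered from the single vertex $c_1$ that $S_{c_2}$ contributes against $w$).
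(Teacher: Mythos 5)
Your proof is correct, but it takes a genuinely different route from the paper's. The paper's argument is order-theoretic: it first proves a propagation step—for any path $(u,v,w)$ in a tree, $u \succ v$ implies $v \succ w$—so strict preference cascades along paths; it then splits into two cases: if a Condorcet winner exists, cascading from it combined with CCT forces strictly decreasing centrality along every path leaving it, and if no winner exists, it invokes the result of \citet{Telek:2017} (quoted earlier in the paper) that two adjacent weak Condorcet winners exist, and cascades from those. Your proof instead translates CCT into subtree-size comparisons across each edge, characterizes $\mathrm{Top}_F(G)$ as exactly the centroid set of the tree, and then verifies the required relations $u \sim v$ and $u \succ w$ by direct bookkeeping of attachment points along paths. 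This buys self-containment: you never need Telek's existence theorem—your centroid analysis effectively reproves it—and you obtain an explicit structural description of $\mathrm{Top}_F(G)$ that the paper's proof only yields implicitly. The price is that you lean on the classical fact (Jordan's theorem) that a tree has either one centroid or two adjacent centroids splitting it into halves of size $n/2$; this is standard, but in a self-contained write-up you should cite it or include the short derivation: for distinct centroids $c_1 \neq c_2$, the branch of each toward the other has size at most $n/2$, their complements nest into each other, forcing both branches to have size exactly $n/2$ and to partition $V$, which rules out any vertex strictly between $c_1$ and $c_2$ and hence forces adjacency and at most two centroids. Overall, the paper's route is shorter and reuses prior work; yours is longer but more informative and independent of it.
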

\begin{proof}
First, let $(u,v,w)$ be a path in a tree and assume $u \succ v$. 
We have $Net(v,u) \ge Net(w,u) \ge Net(w,v)$.
Analogously, $Net(u,v) \le Net(u,w) \le Net(v,w)$.
Hence, from $Net(u,v) > Net(v,u)$ we get $Net(v,w) > Net(w,v)$ which means $v \succ w$.

Now, fix a tree $G$ and assume $F$ satisfies CCT.
Assume there exists a Condorcet winner $u$.
Let $(u,v_1,\dots,v_k,w)$ be a path to an arbitrary node $w$. 
From the above observation we know that $u \succ v_1$ implies $v_1 \succ v_2$.
Also, $v_1 \succ v_2$ implies $v_2 \succ v_3$ and so on.
Eventually, we get $v_i \succ v_{i+1}$ for every $i \in \{1,\dots,k-1\}$ and $v_k \succ w$.
If $F$ satisfies CCT, then this implies $F_u(G) > \dots > F_w(G)$ and $\mathrm{Top}_F(G) = \{u\}$.

Now, assume there is no Condorcet winner and let $u,v$ be weak Condorcet winners.
From CCT and the fact that $u \sim v$ we know that $F_u(G) = F_v(G)$.
Now, for every node $w \in V \setminus \{u,v\}$ there exists a path from $u$ or $v$ that does not go through the other node.
Hence, the reasoning from the first part of the proof can be repeated to show that $F_u(G) > F_w(G)$. Eventually we get that $\mathrm{Top}_F(G) = \{u,v\}$ which concludes the proof.
\end{proof}

This leads us to the following result.

\begin{theorem}
Closeness centrality is Condorcet consistent on trees.
\end{theorem}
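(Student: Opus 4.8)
The plan is to reduce the theorem to Lemma~\ref{lemma:trees_axiom_condorcet}. Since that lemma already establishes that every centrality satisfying CCT is Condorcet consistent on trees, it suffices to verify that Closeness centrality satisfies CCT. All the remaining work is therefore concentrated in checking the single equivalence $u \succeq v \Leftrightarrow C_u(G) \ge C_v(G)$ for adjacent nodes $u,v$ in a tree.

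First I would fix a tree $G = (V,E)$ and an edge $\{u,v\} \in E$. Because $G$ is a tree, every edge is a bridge, so removing $\{u,v\}$ splits $G$ into exactly two components $S_u \ni u$ and $S_v \ni v$. The key structural fact is that each $w \in S_v$ satisfies $d(w,u) = d(w,v)+1$ and each $w \in S_u$ satisfies $d(w,v) = d(w,u)+1$, since the unique path between the two sides must traverse the edge $\{u,v\}$. Summing these shifts over all nodes and cancelling the common terms yields exactly the identity~\eqref{eq:closeness_in_trees}, namely $C_v^{-1}(G) - C_u^{-1}(G) = |S_u| - |S_v|$.

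Next I would translate the head-to-head comparison into component sizes. In a tree a node $w$ prefers $u$ to $v$ precisely when $w \in S_u$, so $Net(u,v) = |S_u|$ and $Net(v,u) = |S_v|$. Hence $u \succeq v$ is equivalent to $|S_u| \ge |S_v|$, which by the displayed identity is equivalent to $C_v^{-1}(G) \ge C_u^{-1}(G)$. Since both distance sums are strictly positive, inverting reverses the inequality, giving $C_u(G) \ge C_v(G)$. Chaining these equivalences yields $u \succeq v \Leftrightarrow C_u(G) \ge C_v(G)$, which is exactly CCT for Closeness, and Lemma~\ref{lemma:trees_axiom_condorcet} then delivers the conclusion.

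I do not expect a genuine obstacle, as equation~\eqref{eq:closeness_in_trees} already packages the only nontrivial computation. The only points requiring care are confirming that $Net(u,v)$ equals $|S_u|$ exactly---no node lies at equal distance from both endpoints of a tree edge, since every node sits strictly on one side of the cut---and tracking the sign flip when passing from the reciprocal distance sums back to Closeness itself.
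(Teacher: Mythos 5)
Your proposal is correct and follows essentially the same route as the paper: the paper also derives the identity $C_v^{-1}(G)-C_u^{-1}(G)=|S_u|-|S_v|$ from the bridge decomposition at the edge $\{u,v\}$ (its Equation~\eqref{eq:closeness_in_trees}), concludes that Closeness satisfies CCT, and then invokes Lemma~\ref{lemma:trees_axiom_condorcet}. The only cosmetic difference is that the paper places the CCT verification in the text preceding the lemma rather than inside the theorem's proof.
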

\begin{proof}
Directly from Lemma~\ref{lemma:trees_axiom_condorcet} and the fact that Closeness centrality satisfies CCT.
\end{proof}

Hardly any other centrality measure is Condorcet consistent on trees.
Consider the graph from Figure~\ref{fig:trees_counterexample}.
Here, node $v$ is a Condorcet winner.
However, most classic centrality measures, including other distance-based centralities (Degree, Harmonic, Decay), medial centralities (Betweenness, Stress), and feedback centralities (PageRank, Eigenvector, Katz), all rank node $u$ at the top.

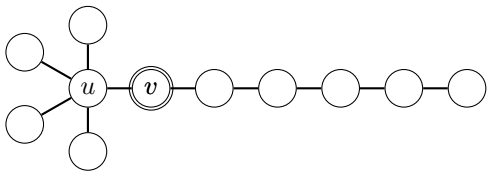
\begin{figure}[t]
\centering
\begin{tikzpicture}[x=6cm,y=6cm] 
  \tikzset{     
    e4c node/.style={circle,draw,minimum size=0.5cm,inner sep=0}, 
    e4c edge/.style={sloped,above,font=\footnotesize}
  }
  
  \node[e4c node,minimum size=0.58cm] (1) at (0.28,0.0) {$v$};
  
  \node[e4c node] (0) at (0.98, 0.0) {}; 
  \node[e4c node] (1) at (0.84, 0.0) {}; 
  \node[e4c node] (2) at (0.70, 0.0) {}; 
  \node[e4c node] (3) at (0.56, 0.0) {}; 
  \node[e4c node] (4) at (0.42, 0.0) {}; 
  \node[e4c node] (5) at (0.28, 0.0) {$v$}; 
  \node[e4c node] (6) at (0.14, 0.0) {$u$}; 
  \node[e4c node] (7) at (0.14, -0.14) {}; 
  \node[e4c node] (8) at (0.14, +0.14) {}; 
  \node[e4c node] (9) at (0.00, +0.08) {}; 
  \node[e4c node] (10) at (0.00, -0.08) {}; 

  \path[draw,thick]
  (6) edge[e4c edge]  (9)
  (6) edge[e4c edge]  (8)
  (6) edge[e4c edge]  (7)
  (5) edge[e4c edge]  (6)
  (4) edge[e4c edge]  (5)
  (3) edge[e4c edge]  (4)
  (2) edge[e4c edge]  (3)
  (1) edge[e4c edge]  (2)
  (0) edge[e4c edge]  (1)
  (6) edge[e4c edge]  (10)
  ;
\end{tikzpicture}
\caption{Node $v$ is a Condorcet winner, while node $u$ is ranked first according to the majority of classic centralities.}
\label{fig:trees_counterexample}
\end{figure}

A notable exception is RW-Closeness centrality which, as we will show, is also Condorcet consistent.


\begin{theorem}\label{theorem:trees_rwc_consistent}
RW-Closeness centrality is Condorcet consistent on trees.
\end{theorem}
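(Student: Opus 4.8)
The plan is to reduce Theorem~\ref{theorem:trees_rwc_consistent} to Lemma~\ref{lemma:trees_axiom_condorcet} by showing that RW-Closeness centrality satisfies CCT on trees. That is, if $\{u,v\}$ is an edge in a tree $G$ with $K(G-\{u,v\}) = \{S_u, S_v\}$ (where $u \in S_u$, $v \in S_v$), then I want to prove the analogue of equation~\eqref{eq:closeness_in_trees}, namely that the sign of $RWC_v^{-1}(G) - RWC_u^{-1}(G)$ is governed by $|S_u| - |S_v|$, so that $u \succeq v \Leftrightarrow RWC_u(G) \ge RWC_v(G)$. Since every edge in a tree is a bridge, this is exactly the setting where hitting times across the edge have clean closed forms.

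\medskip

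\noindent\textbf{Key steps.} First I would recall the standard fact that on a tree the hitting time along an edge has an exact combinatorial expression: if $\{u,v\}$ is an edge splitting the tree into $S_u \ni u$ and $S_v \ni v$, then crossing from $v$ into the $S_u$-side costs $H(v,u) = 2|E[S_v]| + 1 = 2(|S_v|-1)+1 = 2|S_v|-1$, because the walk started at $v$ must, in expectation, traverse every edge on the $v$-side twice except for the final crossing of $\{u,v\}$. Symmetrically $H(u,v) = 2|S_u|-1$. More generally, for a node $w \in S_v$ one has $H(w,u) = H(w,v) + H(v,u)$ since every path from $w$ to $u$ must pass through the bridge $v$; and for $w \in S_u$, $H(w,u) = H(w,v) - H(v,u)$ by the same bridge argument run in reverse (here I would use that hitting times decompose additively across a cut vertex / bridge). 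The second step is to sum these relations: splitting $\sum_{w} H(w,u)$ and $\sum_{w} H(w,v)$ according to whether $w \in S_u$ or $w \in S_v$, the per-node differences telescope so that
\[
RWC_v^{-1}(G) - RWC_u^{-1}(G) = |S_u|\cdot H(u,v) - |S_v| \cdot H(v,u)
\]
after the within-side terms cancel. Substituting $H(u,v)=2|S_u|-1$ and $H(v,u)=2|S_v|-1$ and simplifying should yield an expression whose sign equals the sign of $|S_u|-|S_v|$; I expect it to factor into something like $(|S_u|-|S_v|)$ times a strictly positive quantity. Combined with the fact (used in the CCT discussion) that $u \succeq v \Leftrightarrow |S_u| \ge |S_v|$ for a bridge, this establishes CCT for RW-Closeness.

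\medskip

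\noindent\textbf{Main obstacle.} The delicate part is the bookkeeping in the summation step: I must correctly handle the two sides of the cut with opposite signs and verify that the intra-side hitting-time sums genuinely cancel rather than contributing a residual term. In particular, the additive decomposition $H(w,u) = H(w,v) \pm H(v,u)$ must be applied with the right sign on each side, and I need the clean tree identity $H(v,u) = 2|S_v|-1$ (equivalently expressed via the number of edges on each side) to be exact, not merely asymptotic. Once the difference collapses to an expression that is monotone in $|S_u|-|S_v|$ with strict sign agreement, the theorem follows immediately: RW-Closeness satisfies CCT, and Lemma~\ref{lemma:trees_axiom_condorcet} then delivers Condorcet consistency on trees.
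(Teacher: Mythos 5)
Your proposal is correct and takes essentially the same route as the paper: reduce to Lemma~\ref{lemma:trees_axiom_condorcet} by verifying CCT, decompose hitting times across the bridge edge $\{u,v\}$, and use the tree identity $H(u,v)=2|S_u|-1$ (the paper derives it from the expected return time $2|S_u|$ of the leaf $v$ in $G[S_u\cup\{v\}]$; your edge-doubling argument is the standard equivalent), arriving at $RWC_v^{-1}(G)-RWC_u^{-1}(G) = |S_u|\,H(u,v)-|S_v|\,H(v,u) = (|S_u|-|S_v|)(2n-1)$, which is exactly the paper's Equation~\eqref{eq:proof_rwc}. One slip to fix: for $w\in S_u$ the decomposition is $H(w,u)=H(w,v)-H(u,v)$ (a walk from $w$ must hit $u$ before $v$, so $H(w,v)=H(w,u)+H(u,v)$), not $H(w,v)-H(v,u)$ as you wrote; with that correction your claimed summation formula is right, the cancellation you flag as the main obstacle goes through cleanly, and the sign of the difference agrees strictly with that of $|S_u|-|S_v|$, as required for CCT.
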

\begin{proof}
Based on Lemma~\ref{lemma:trees_axiom_condorcet} it is enough to prove that RW-Closeness satisfies CCT.
Fix $\{u,v\} \in E$ and let $K(G-\{u,v\}) = \{S_u, S_v\}$ s.t. $u \in S_u$, $v \in S_v$.
Consider the random walk starting in $w \in S_u$. 
Before the random walk reaches $v$ it must reach $u$ first. 
Hence, $H(w,v) = H(w,u) + H(u,v)$ and we get:
\[ RWC_v^{-1}(G) =  \sum_{w \in S_u} (H(w,u) + H(u,v)) + \sum_{w \in S_v} H(w,v). \]

Now, let us compute the hitting time $H(u,v)$.
Since $H(u,v)$ concerns the first time the random walk enters node $v$, it does not depend on nodes $S_v \setminus \{v\}$.
Hence, we can concentrate on the induced subgraph $G[S_u \cup \{v\}]$. 
Consider the expected return time of node $v$ in this subgraph.
Since $G[S_u \cup \{v\}]$ is a tree with $|S_u|$ edges and $v$ is a leaf, the expected return time of node $v$ in graph $G[S_u \cup \{v\}]$ equals $2|S_u|$.
Now, if the random walk starts in $u$, it will reach node $v$ one step earlier.
Hence, we get that $H(u,v) = 2|S_u|-1$.

Overall, we get that:
\begin{equation*}
RWC_v^{-1}(G) - RWC_u^{-1}(G) =
|S_u|(2|S_u|-1) - |S_v|(2|S_v|-1).
\end{equation*}
From the fact that $|S_u|^2-|S_v|^2 = (|S_u|-|S_v|)(|S_u|+|S_v|)$ and $|S_u|+|S_v| = n$ we have:
\begin{equation}\label{eq:proof_rwc}
RWC_v^{-1}(G) - RWC_u^{-1}(G) = (|S_u|-|S_v|)(2n-1)
\end{equation}
which implies the thesis.
\end{proof}

The fact that Closeness and RW-Closeness centralities both satisfy CCT does not imply they impose the same ranking of nodes.
In particular, consider the following centrality:
\begin{equation}\label{eq:centrality_zero_to_leafs}
X_v(G) = \begin{cases}
C_v(G) & \mbox{ if }\deg(v) > 1,\\
0 & \mbox{otherwise.}
\end{cases}
\end{equation}
Clearly, this centrality satisfies CCT, but results in a different ranking than Closeness centrality.
It is also distance-based, which means that Closeness centrality is not the only distance-based centrality that satisfies CCT.

While CCT does not imply a unique ranking, our analysis from Theorem~\ref{theorem:trees_rwc_consistent} implies that indeed Closeness and RW-Closeness centralities rank nodes in the same way.

\begin{theorem}\label{theorem:trees_closeness_rwc}
Closeness centrality and RW-Closeness centrality impose the same ranking of nodes in a tree.
\end{theorem}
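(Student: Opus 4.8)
The plan is to leverage the two explicit formulas already computed. Equation~\eqref{eq:closeness_in_trees} states that for any edge $\{u,v\}$ with $K(G-\{u,v\}) = \{S_u,S_v\}$, $u \in S_u$, $v \in S_v$, we have $C_v^{-1}(G) - C_u^{-1}(G) = |S_u| - |S_v|$, while Equation~\eqref{eq:proof_rwc} gives $RWC_v^{-1}(G) - RWC_u^{-1}(G) = (|S_u|-|S_v|)(2n-1)$. Comparing these two directly yields the edge-local identity
\[ RWC_v^{-1}(G) - RWC_u^{-1}(G) = (2n-1)\bigl(C_v^{-1}(G) - C_u^{-1}(G)\bigr) \]
for every edge $\{u,v\} \in E$. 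In words, across each edge the two inverse-centrality functions have proportional increments, with one and the same positive constant of proportionality $2n-1$.

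The remaining task is to upgrade this from adjacent nodes to arbitrary pairs of nodes, and this is the only real subtlety: as the centrality $X_v$ defined in~\eqref{eq:centrality_zero_to_leafs} demonstrates, agreement that is merely edge-local (such as CCT) is by itself too weak to pin down the full ranking. Here I would invoke the defining property of a tree: between any two nodes $a$ and $b$ there is a (unique) path $a = w_0, w_1, \dots, w_m = b$. Summing the edge-local identity over the consecutive edges $\{w_i, w_{i+1}\}$ of this path, both sides telescope, giving
\[ RWC_b^{-1}(G) - RWC_a^{-1}(G) = (2n-1)\bigl(C_b^{-1}(G) - C_a^{-1}(G)\bigr) \]
for \emph{every} pair of nodes $a,b \in V$.

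Finally, fixing an arbitrary reference node $r$ and rearranging the last equation shows $RWC_v^{-1}(G) = (2n-1)\,C_v^{-1}(G) + c$ for a constant $c = RWC_r^{-1}(G) - (2n-1)C_r^{-1}(G)$ independent of $v$; that is, inverse RW-Closeness is an increasing affine function of inverse Closeness. Since $2n-1 > 0$ and inversion is order-reversing on the positive reals, this immediately gives $RWC_a(G) \ge RWC_b(G) \Leftrightarrow C_a(G) \ge C_b(G)$ for all $a,b$, so the two measures induce the same ranking. I expect the telescoping step to be the crux: it is what converts the purely local per-edge comparisons into a global statement, and it is exactly the place where the tree hypothesis is essential. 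The edge-local identity above relies on every edge being a bridge (so that $K(G-\{u,v\})$ consists of precisely two components), which holds on trees but fails on general graphs; once that identity is available on all edges, the connectivity of the tree lets the increments genuinely add up to the endpoint difference.
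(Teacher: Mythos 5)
Your proposal is correct and follows essentially the same route as the paper's proof: both combine Equations~\eqref{eq:closeness_in_trees} and \eqref{eq:proof_rwc} into the per-edge proportionality with factor $2n-1$ and then telescope along the (unique) path between two arbitrary nodes. Your final repackaging into a global affine relation $RWC_v^{-1}(G) = (2n-1)\,C_v^{-1}(G) + c$ is a harmless elaboration of the paper's concluding step, which simply reads off the order equivalence from the telescoped sum.
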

\begin{proof} 
Let $(v_0, v_1, \dots, v_k)$ be an arbitrary path in a tree $G = (V,E)$.
Fix $i \in \{0,\dots,k-1\}$.
From Equations~\eqref{eq:closeness_in_trees} and \eqref{eq:proof_rwc} we get that:
\[ \frac{RWC_{v_i}^{-1}(G)-RWC_{v_{i+1}}^{-1}(G) }{(2n-1)} = C_{v_i}^{-1}(G)-C_{v_{i+1}}^{-1}(G). \]
Summing over all $i \in \{0,\dots,k-1\}$ we get that $C_{v_0}(G) \le C_{v_k}(G)$ if and only if $RWC_{v_0}(G) \le RWC_{v_k}(G)$.
\end{proof}

\subsection{Condorcet Comparison for non-adjacent nodes}

CCT applies only to nodes connected by an edge. 
In this section, we ask the question of whether this assumption can be relaxed.
First, consider the following version of CCT.

\begin{definition} (General CCT) For every tree $G = (V,E)$ and two nodes $u,v \in V$:
$u \succeq v \Leftrightarrow F_u(G) \ge F_v(G)$.
\end{definition}

It is easy to check that Closeness centrality does not satisfy General CCT.
Roughly speaking, this is because Closeness centrality when it compares two nodes cares not only about the number of nodes closer to each of them but also about how much closer these nodes are.
A counterexample is presented in Figure~\ref{fig:trees_closeness_general_comparison}.
Except for the nodes on the path between $u$ and $v$, we have two nodes $w$ with $d(w,u) = d(w,v)-1$ and one node with $d(w,u)=d(w,v)+3$. 

\begin{figure}[t]
\centering
\begin{tikzpicture}[x=4cm,y=4cm] 
  \tikzset{     
    e4c node/.style={circle,draw,minimum size=0.5cm,inner sep=0}, 
    e4c edge/.style={sloped,above,font=\footnotesize}
  }
  \node[e4c node,minimum size=0.58cm] (5) at (0.25, 0.20) {}; 
   
  \node[e4c node] (1) at (0.00, 0.20) {$u$}; 
  \node[e4c node] (3) at (0.25, 0.40) {}; 
  \node[e4c node] (4) at (0.25, 0.00) {}; 
  \node[e4c node] (5) at (0.25, 0.20) {}; 
  \node[e4c node] (6) at (0.50, 0.20) {}; 
  \node[e4c node] (7) at (0.75, 0.20) {$v$}; 
  \node[e4c node] (8) at (1.00, 0.20) {};

  \path[draw,thick]
  (1) edge[e4c edge]  (5)
  (3) edge[e4c edge]  (5)
  (4) edge[e4c edge]  (5)
  (5) edge[e4c edge]  (6)
  (6) edge[e4c edge]  (7)
  (7) edge[e4c edge]  (8)
  ;
\end{tikzpicture}
\caption{We have $u \succ v$, but $C_u(G) < C_v(G)$.}
\label{fig:trees_closeness_general_comparison}
\end{figure}
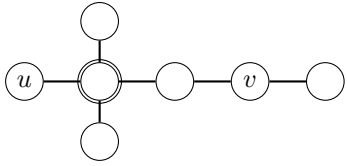

We note, however, a stronger result holds: General CCT cannot be satisfied by any centrality measure.
This is because $\succeq$ relation is not transitive.
To see that, consider the graph from Figure~\ref{fig:trees_new_measure}. 
We have $u \sim v$ and $v \sim w$, but $w \succ u$ (hence, $u \succeq v \succeq w$, but $u  \not \succeq w$).
General CCT for this graph implies $F_u(G) = F_v(G) = F_w(G)$, but at the same time $F_w(G) > F_u(G)$.

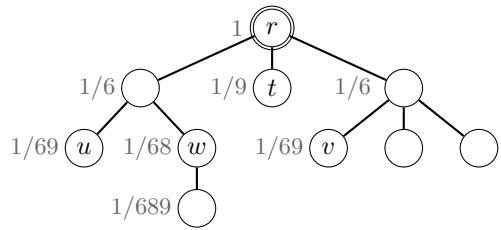
\begin{figure}[b]
\centering
\begin{tikzpicture}[x=2.5cm,y=2cm] 
  \tikzset{     
    e4c node/.style={circle,draw,minimum size=0.5cm,inner sep=0}, 
    e4c edge/.style={sloped,above,font=\footnotesize}
  }
  
  \node[e4c node,minimum size=0.58cm] (1) at (0.60,2.00) {};

  \node[e4c node] (1) at (0.60, 2.00) {$r$}; 
  \node[e4c node] (2) at (-0.10, 1.60) {}; 
  \node[e4c node] (3) at (1.30, 1.60) {}; 
  \node[e4c node] (4) at (0.60, 1.60) {$t$}; 
  \node[e4c node] (5) at (-0.40, 1.20) {$u$}; 
  \node[e4c node] (6) at (0.20, 1.20) {$w$}; 
  \node[e4c node] (7) at (0.20, 0.80) {}; 
  \node[e4c node] (8) at (0.90, 1.20) {$v$}; 
  \node[e4c node] (9) at (1.30, 1.20) {}; 
  \node[e4c node] (10) at (1.70, 1.20) {}; 

  \node[left=+0.00cm of 1,color=black!60] {\small $1$};
  \node[left=-0.05cm of 2,color=black!60] {\small $1/6$};
  \node[left=+0.05cm of 3,color=black!60] {\small $1/6$};
  \node[left=-0.05cm of 4,color=black!60] {\small $1/9$};
  \node[left=-0.05cm of 5,color=black!60] {\small $1/69$};
  \node[left=-0.05cm of 6,color=black!60] {\small $1/68$};
  \node[left=-0.05cm of 7,color=black!60] {\small $1/689$};
  \node[left=-0.05cm of 8,color=black!60] {\small $1/69$};

  \path[draw,thick]
  (1) edge[e4c edge]  (2)
  (1) edge[e4c edge]  (3)
  (1) edge[e4c edge]  (4)
  (2) edge[e4c edge]  (5)
  (2) edge[e4c edge]  (6)
  (6) edge[e4c edge]  (7)
  (3) edge[e4c edge]  (8)
  (3) edge[e4c edge]  (10)
  (3) edge[e4c edge]  (9)
  ;
\end{tikzpicture}
\caption{We have $u \sim v$, $v \sim w$, but $w \succ u$. The values next to nodes are computed using the measure defined in Proposition~\ref{proposition:trees_measure_exists}.}
\label{fig:trees_new_measure}
\end{figure}

From the above discussion, it is clear that there is no centrality measure that assigns equal centralities to every two nodes that tie in a head-to-head comparison.
Hence, let us consider a weaker property that focuses solely on the strict relation.
\begin{definition} (Weak General CCT) For every tree $G = (V,E)$ and two nodes $u,v \in V$:
$u \succ v \Rightarrow F_u(G) > F_v(G)$.
\end{definition}

This property is also not satisfied by Closeness centrality, as evident from Figure~\ref{fig:trees_closeness_general_comparison}.

Let us analyze in more detail $\succeq$ relation.
We begin with the following observation that states that to compare two nodes, it is enough to compare their neighbors on the path between them.

\begin{lemma}\label{lemma:moving_closer}
Let $(u,u',\dots,v',v)$ be a path in a tree of length at least 3.  
Then, $u \succeq v$ iff $u' \succeq v'$.
\end{lemma}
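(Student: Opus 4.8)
The plan is to prove the stronger statement that the two head-to-head comparisons coincide exactly, namely $Net(u,v) = Net(u',v')$ and $Net(v,u) = Net(v',u')$. Granting this, $u \succeq v$ (that is, $Net(u,v) \ge Net(v,u)$) holds if and only if $Net(u',v') \ge Net(v',u')$, i.e. if and only if $u' \succeq v'$, and the lemma follows immediately. Throughout I write the path as $(v_0,\dots,v_k)$ with $v_0 = u$, $v_1 = u'$, $v_{k-1} = v'$, $v_k = v$, where $k \ge 3$ by hypothesis (so in particular $u' \neq v'$).

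First I would exploit the tree structure to classify every node by where it attaches to the $u$--$v$ path. Since $G$ is a tree, for each $w \in V$ there is a unique node $v_j$ on the path minimizing $d(w,v_j)$; call it the \emph{projection} of $w$, and let $T_j$ be the set of nodes whose projection is $v_j$, so that $\{T_0,\dots,T_k\}$ partitions $V$. The key structural fact, immediate from the uniqueness of paths in a tree, is that distances decompose additively through the projection: $d(w,v_i) = d(w,v_j) + |i-j|$ whenever $w \in T_j$, and the analogous identity holds with $u'$ and $v'$ in place of path nodes. Consequently, a node's preference in any head-to-head contest between two nodes of the path depends only on the index $j$ of its projection, and not on the branch hanging off $v_j$.

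Next I would read both comparisons off this partition. For the contest $u$ versus $v$, a node in $T_j$ prefers $u$ iff $j < k/2$ and prefers $v$ iff $j > k/2$ (when $k$ is even, the class $T_{k/2}$ is equidistant and counts for neither), so $Net(u,v) = \sum_{j < k/2} |T_j|$ and $Net(v,u) = \sum_{j > k/2} |T_j|$. For the contest $u'$ versus $v'$, every node with $1 \le j \le k-1$ obeys the same threshold $j < k/2$ versus $j > k/2$, while the two end caps $T_0$ and $T_k$ must be treated separately: a node in $T_0$ has $d(\cdot,u') = d(\cdot,u)+1$ and $d(\cdot,v') = d(\cdot,u)+(k-1)$, hence prefers $u'$ precisely because $1 < k-1$; symmetrically every node in $T_k$ prefers $v'$, and $T_{k/2}$ remains equidistant. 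Collecting terms yields $Net(u',v') = \sum_{j < k/2} |T_j|$ and $Net(v',u') = \sum_{j > k/2} |T_j|$, matching the sums computed for $u$ and $v$.

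The only genuine subtlety, and the exact place where the hypothesis that the path has length at least $3$ is used, is the bookkeeping for the end caps $T_0$ and $T_k$: the inequality $1 < k-1$ that keeps $T_0$ on the side of $u'$ fails for $k = 2$ (and indeed for $k = 2$ one has $u' = v'$, so the statement is vacuous). I therefore expect this end-cap analysis, together with confirming that the even-$k$ tie class $T_{k/2}$ contributes to neither side in either comparison, to be the main point to verify carefully; the additive distance decomposition on which everything rests is an immediate consequence of tree structure. An alternative route by induction, moving both endpoints inward one step at a time, is possible but the projection argument has the advantage of proving equality of the $Net$ values directly and in one stroke.
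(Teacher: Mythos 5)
Your proposal is correct and follows essentially the same route as the paper's proof: both classify each node $w$ by where it attaches to the $u$--$v$ path (your projection classes $T_0,\dots,T_k$ refine the paper's three cases: behind $u$, behind $v$, or in the middle) and both establish the stronger claim that $Net(u,v)=Net(u',v')$ and $Net(v,u)=Net(v',u')$, with the length-at-least-3 hypothesis entering at exactly the same point (the end caps $T_0$ and $T_k$ must still prefer the near endpoint after moving inward). The only difference is presentational: the paper handles all middle nodes at once by noting both distances drop by exactly one, whereas you compute explicit thresholds $j \lessgtr k/2$, which is a fine but not essentially different argument.
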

\begin{proof}
Take an arbitrary node $w \in V$. There are three cases that we consider separately:
\begin{itemize}
\item If the path from $v$ to $w$ goes through $u$, then the path from $v'$ to $w$ goes through $u'$. Hence, $d(w,u) < d(w,v)$ and $d(w,u') < d(w,v')$.
\item Analogously, if the path from $u$ to $w$ goes through $v$, then $d(w,u) > d(w,v)$ and $d(w,u') > d(w,v')$.
\item Assume otherwise. Then $d(w,u') = d(w,u) - 1$ and $d(w,v') = d(w,v) - 1$. Hence, $d(w,u) - d(w,v) = d(w,u') - d(w,v')$.
\end{itemize}
In result, we get that $N(u,v) = Net(u',v')$ and symmetrically $N(v,u) = Net(v',u')$.
This concludes the proof.
\end{proof}

Based on Lemma~\ref{lemma:moving_closer} we show that $\succeq$ relation depends on two factors: distance of nodes to the Condorcet winner(s) and sizes of subtrees if we root the tree in compared nodes.

Specifically, let us define the \emph{level} of node $v$, denoted by $l(v)$, as the distance to the Condorcet winner or the closer weak Condorcet winner.
Moreover, let $T_u^v$ be the set of nodes to which the path from $v$ goes through $u$:
\[ T_u^v = \{w \in V : d(w,v) = d(w,u) + d(u,v) \} \]
In other words, if we root the whole tree at $v$, then $T_u^v$ consists of nodes from a subtree rooted at $u$.

\begin{lemma}\label{lemma:relation_dist}
For every tree $G$, every two nodes $u,v$ it holds: $u \succeq v$ if and only if 
\[ (l(u) < l(v)) \lor (l(u) = l(v) \land |T^u_w| \le |T^v_w|), \]
where $w$ is the middle node (or the further middle node if there are two) on the path between $u$ and $v$, i.e., the unique node s.t. $d(u,w) = \lceil d(u,v)/2 \rceil$ and $d(w,v) = \lfloor d(u,v)/2 \rfloor$.
\end{lemma}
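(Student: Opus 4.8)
The plan is to prove both directions uniformly by first reducing an arbitrary pair $u,v$ to a short path and then splitting the analysis according to whether the levels $l(u),l(v)$ differ or coincide. The engine is Lemma~\ref{lemma:moving_closer}, whose proof in fact establishes the stronger equalities $Net(u,v)=Net(u',v')$ and $Net(v,u)=Net(v',u')$ for the one-step-inward pair; hence not only $\succeq$ but also $\succ$, $\sim$, and the sign of $Net(u,v)-Net(v,u)$ are preserved. Iterating collapses the path $(u=x_0,\dots,x_d=v)$ either to an adjacent pair (when $d$ is odd) or to a distance-$2$ pair sharing the middle node $w$ (when $d$ is even), and these are the only two configurations I will evaluate directly.

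The core is the monotonicity claim: if $l(u)<l(v)$ then $u\succ v$. I would prove it by induction on $d(u,v)$, rooting the tree at the Condorcet winner, or---when none exists---at the weak Condorcet winner of \citet{Telek:2017} on the side of $u$, so that the levels along the $u$--$v$ path are ``V-shaped'', minimized at $a=\mathrm{LCA}(u,v)$. In the winner-rooted (and same-side weak) situation, $l(u)<l(v)$ says exactly that $a$ lies strictly on the $u$-side of the midpoint, and this sidedness is what one checks is preserved in the passage from $(u,v)$ to $(u',v')$; the inductive step is then immediate from the $Net$-preservation above. The base case of adjacent $a,b$ with $a$ closer to the winner follows from the winner's majority: deleting the edge gives $Net(a,b)=|T^b_a|$ and $Net(b,a)=|T^a_b|$, and since every node of the $b$-component prefers $b$ to the winner while all nodes preferring the winner to $b$ lie in the $a$-component, the winner's strict majority over $b$ forces $|T^b_a|>|T^a_b|$, i.e.\ $a\succ b$.

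For equal levels $l(u)=l(v)$ with $u,v$ on the same side of the winner, the path has even length $d=2(l(u)-l(a))$ and its middle node $w$ is exactly $a$; a direct count by meeting point gives $Net(u,v)-Net(v,u)=|T^v_w|-|T^u_w|$, so $u\succeq v\Leftrightarrow|T^u_w|\le|T^v_w|$, which is the second disjunct of the statement. Assembling the three regimes proves the lemma: $l(u)<l(v)$ yields $u\succ v$ and the first disjunct; $l(u)>l(v)$ yields $v\succ u$, making both sides false; and $l(u)=l(v)$ is decided by the subtree comparison.

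The step I expect to be the main obstacle is the weak-Condorcet case in which $u,v$ sit on opposite sides of the central edge $\{u^\ast,v^\ast\}$. Here the V-shape argument breaks, but one can count directly using that each side of the central edge holds exactly $n/2$ nodes: if $l(u)<l(v)$ then the $u$-side of the midpoint contains all $n/2$ nodes of $u^\ast$'s side and strictly more, so $Net(u,v)>n/2\ge Net(v,u)$ and $u\succ v$; if $l(u)=l(v)$ the two halves contribute equally, $u\sim v$, and $d(u,v)=2l(u)+1$ is odd, so no even-distance identity is available. The delicate point is to verify that the \emph{further} middle node prescribed in the statement, namely $w=v^\ast$, makes the weak inequality $|T^u_w|\le|T^v_w|$ hold in this tie: $T^u_w$ is precisely the $v^\ast$-side of size $n/2$, whereas $T^v_w$ additionally contains $w$ together with the entire $u^\ast$-side. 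Getting this borderline bookkeeping right---exactly the configuration the asymmetric tie-break was designed for---is the crux of the proof.
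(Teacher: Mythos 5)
Your proposal follows essentially the same route as the paper's proof: Lemma~\ref{lemma:moving_closer} (in its stronger, $Net$-preserving form) is the engine, the case split is on levels and parity, and the even-distance case is settled by the meeting-point identity $Net(u,v)-Net(v,u)=|T^v_w|-|T^u_w|$, which is exactly the paper's computation (the paper first contracts to the distance-$2$ pair flanking $w$ and then counts; you count directly). Where you differ is only in how the monotonicity claim $l(u)<l(v)\Rightarrow u\succ v$ is grounded: the paper contracts until $u'$ lies on the path from $v'$ to the winner(s) and invokes the path-monotonicity observation from the proof of Lemma~\ref{lemma:trees_axiom_condorcet}, whereas you run an induction on $d(u,v)$ whose adjacent base case is resolved by the winner's strict majority. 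Both groundings are sound.

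However, two boundary details in your write-up are broken as stated, though both are fixable with tools you already deploy. First, your induction for monotonicity has only the $d=1$ base case, but Lemma~\ref{lemma:moving_closer} requires paths of length at least $3$, so even-distance pairs with $l(u)<l(v)$ bottom out at $d=2$ and are covered by none of your three regimes; you need to settle the distance-$2$ pair directly, e.g.\ via your even-distance count together with the winner-majority bound $|T^{u}_{w}|<n/2$ (the winner beats the middle node $w$, and every node whose path to $u$ passes through $w$ prefers $w$ to the winner). Second, in the opposite-side case the chain $Net(u,v)>n/2\ge Net(v,u)$ fails exactly when $l(v)=l(u)+1$: there the midpoint is $v^\ast$ itself, so the $u$-side of the midpoint is precisely the $n/2$ nodes of $u^\ast$'s side and not ``strictly more'', giving $Net(u,v)=n/2$; the conclusion $u\succ v$ survives only because all nodes meeting the path at $v^\ast$ (including $v^\ast$ itself) are equidistant from $u$ and $v$, which forces $Net(v,u)\le n/2-1$.
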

\begin{proof}
Take two nodes $u,v$.
Without loss of generality assume $l(u) \le l(v)$.
If $l(u) < l(v)$, then by applying Lemma~\ref{lemma:moving_closer} we get that $u \succeq v$ iff $u' \succeq v'$ where $u'$ belongs to the path from $v'$ to the Condorcet winner(s).
Hence, $u \succ v$.

Now, assume $l(u) = l(v)$ and $2 \nmid d(u,v)$.
This means that the path from $u$ to $v$ goes through two weak Condorcet winners $r,r'$ in that order.
Lemma~\ref{lemma:moving_closer} (applied $l(u)$ times) and $r \sim r'$ implies $u \sim v$.
This agrees with the lemma statement, as $w = r'$ and $|T^u_{r'}| \le |T^v_{r'}|$ is satisfied.

It remains to consider the case when $l(u) = l(v)$ and $2 \mid d(u,v)$.
Let $w$ be the middle node on the path between $u$ and $v$.
Let $u'$ and $v'$ be the neighbors of $w$ that belong to paths between $u$ and $w$ and $v$ and $w$, respectively.
From Lemma~\ref{lemma:moving_closer} we get that $u \succeq v$ iff $u' \succeq v'$.
We have:
\[ Net(u',v') = |T^w_{u'}| = n - |T^u_w|,\]
\[ Net(v',u') = |T^w_{v'}| = n - |T^v_w|. \]
Hence, $Net(u',v') - Net(v',u') = |T^v_w| - |T^u_w|$ which implies $u \succeq v$ iff $|T^u_w| \le |T^v_w|$.
\end{proof}

Based on Lemma~\ref{lemma:relation_dist} it is easy to show that $\succ$ relation is in fact transitive.

\begin{proposition}\label{proposition:trees_no_cycle}
In a tree, if $u \succ v$ and $v \succ w$, then $u \succ w$.
\end{proposition}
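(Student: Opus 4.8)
The plan is to deduce the result from the structural characterization of $\succeq$ in Lemma~\ref{lemma:relation_dist}, treating the level $l(\cdot)$ as a primary sorting key and the subtree sizes as a secondary key. Observe first that Lemma~\ref{lemma:relation_dist} immediately gives $u \succeq v \Rightarrow l(u) \le l(v)$, so from $u \succ v$ and $v \succ w$ we obtain $l(u) \le l(v) \le l(w)$. If the outer inequality is strict, i.e. $l(u) < l(w)$, the first disjunct of the characterization yields $u \succ w$ directly (the reverse $w \succeq u$ would require $l(w) \le l(u)$), and we are done. Hence the whole content is the case $l(u) = l(v) = l(w) =: L$, where all comparisons are governed by the secondary key.

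For the equal-level case I would first reduce to a rooted picture. When a strict Condorcet winner $r$ exists, I root the tree at $r$, so that $l(\cdot)$ coincides with depth and $u,v,w$ all sit at depth $L$. When instead there are two weak Condorcet winners $r,r'$, I would use a parity observation: for two nodes at the same level lying on opposite sides of the bridge $\{r,r'\}$ the distance is $2L+1$, which is odd, and Lemma~\ref{lemma:relation_dist} then forces a tie; since $u\succ v$ and $v\succ w$ are strict, $u,v,w$ must lie on the same side $S_r$, and I again root at $r$. In either situation the middle node $w'$ from Lemma~\ref{lemma:relation_dist} for a pair $x,y$ at depth $L$ is exactly their lowest common ancestor $m$, and $|T^x_m| = n - s(c_x)$, where $c_x$ is the child of $m$ on the path to $x$ and $s(\cdot)$ denotes subtree size. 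Thus the characterization simplifies to the clean local rule: for equal-depth $x,y$ with lowest common ancestor $m$, $x \succ y$ iff $s(c_x) > s(c_y)$.

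It then remains to prove transitivity of this local rule, which I would do by the standard three-node ancestor analysis. Let $a$ be the lowest common ancestor of $u,v,w$; the three nodes fall into either three distinct child-subtrees of $a$ or into two of them. In the first case all three pairwise ancestors equal $a$, so $s(c_u) > s(c_v) > s(c_w)$ chains to $s(c_u) > s(c_w)$, giving $u \succ w$. In the second case the two pairwise comparisons that meet at $a$ do the work: if $u,v$ share a child-subtree then $v \succ w$ already compares that shared child against $w$'s child, and this is exactly the comparison deciding $u \succ w$; symmetrically, if $v,w$ share a subtree then $u \succ v$ supplies it; and the remaining arrangement, in which $u,w$ share a subtree with child $c$, is impossible, because $u \succ v$ and $v \succ w$ would then demand $s(c) > s(c_v)$ and $s(c_v) > s(c)$ simultaneously. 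In every admissible arrangement we conclude $u \succ w$.

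The main obstacle I anticipate is the equal-level case, and within it two points need care: the reduction that places $u,v,w$ on a common side when only weak Condorcet winners exist (handled by the parity argument above), and the bookkeeping of the three possible coincidences among the pairwise lowest common ancestors, so as to confirm that exactly one arrangement is excluded and the other two force the desired strict comparison.
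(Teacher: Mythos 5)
Your proof is correct and takes essentially the same route as the paper's: reduce via Lemma~\ref{lemma:relation_dist} to the equal-level case, rule out nodes separated by the two weak Condorcet winners using the parity/tie argument, and then do a case analysis on the relative positions of the pairwise middle nodes. Your rooted reformulation (middle node $=$ LCA, $|T^x_m| = n - s(c_x)$, so $x \succ y$ iff $s(c_x) > s(c_y)$) is a notational repackaging of the paper's argument, with your three arrangements corresponding exactly to the paper's cases (a)--(d) and your impossible arrangement being the paper's contradictory case (c).
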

\begin{proof}
From Lemma~\ref{lemma:relation_dist} we know that $l(u) \le l(v) \le l(w)$.
If $l(u) < l(v)$ or $l(v) < l(w)$, then $l(u) < l(w)$ which implies $u \succ w$.

Assume otherwise, i.e., $l(u) = l(v) = l(w)$.
For any two nodes at the same level $s,t$ the distance $d(s,t)$ is odd if and only if the path between them goes through two weak Condorcet winners.
In such a case, we have $s \sim t$. 
Since $u \succ v$ and $v \succ w$ we get that $d(u,v)$ and $d(v,w)$ are even which also implies $d(u,w)$ is even since paths from $v$ to both $u$ and $w$ do not go through two weak Condorcet winners.

It remains to consider the case when $l(u) = l(v) = l(w)$ and there is one Condorcet winner $r$ or two weak Condorcet winners, but all nodes $u,v,w$ are closer to one of them: $r$.

Let $x,y,z$ be the middle nodes of the paths between pairs $(u,v)$, $(v,w)$ and $(u,w)$, respectively.
We will use the fact that the middle node between two nodes $s,t$ from the same level for which $d(s,t)$ is even is the unique node that lays on three path: $s$ to $t$, $s$ to $r$ and $t$ to $r$.
In particular, we know that $x$ and $y$ lay on the path from $v$ to $r$.
Let us consider four cases separately (see Figure~\ref{fig:trees_uvw_transitivity} for an illustration):
\begin{enumerate}
\item[(a)] If $l(x) < l(y)$, then node $x$ is on the path from $u$ to $w$, from $u$ to $r$ and from $w$ to $r$; hence $z = x$. 
From $u \succ v$ we get $|T^u_x| < |T^v_x|$ and from the tree structure $|T^v_x| = |T^w_x|$ which implies $|T^u_x| < |T^w_x|$ and $u \succ w$.

\item[(b)] If $l(x) > l(y)$, then node $y$ is on the path from $u$ to $w$, from $u$ to $r$ and from $w$ to $r$; hence $z = y$.
From $v \succ w$ we get $|T^v_y| < |T^w_y|$ and from the tree structure $|T^u_y| = |T^v_y|$ which implies $|T^u_y| < |T^w_y|$ and $u \succ w$.

\item[(c)] If $x = y \neq z$, then from $u \succ v \succ w$ we get $|T^u_x| < |T^v_x| < |T^w_x|$. 
This is, however, a contradiction as clearly $|T^u_x| = |T^w_x|$.

\item[(d)] If $x = y = z$, then from $u \succ v \succ w$ we get $|T^u_x| < |T^v_x| < |T^w_x|$ which implies $u \succ w$.

\end{enumerate}

This concludes the proof. 
Note that our analysis shows that it is not possible that $x,y,z$ are pairwise different.
\end{proof}

\begin{figure}[t]
\centering
\begin{tikzpicture}[x=1.5cm,y=4cm] 
  \tikzset{     
    e4c node/.style={circle,draw,minimum size=0.50cm,inner sep=0,font=\small}, 
    e4c edge/.style={decorate,decoration={snake,amplitude=0.05cm}},
    triangle/.style={draw,densely dashed}
  }
  
  \def\x{0}
  \def\y{0}
  \node[e4c node] (1) at (\x+-0.2, \y+-0.2) {$r$};
  \node[e4c node] (2) at (\x+-0.4, \y+-0.4) {$x$};
  \node[e4c node] (3) at (\x+-0.0, \y+-0.8) {$u$};
  \node[e4c node] (4) at (\x+-0.6, \y+-0.6) {$y$};
  \node[e4c node] (5) at (\x+-0.4, \y+-0.8) {$w$};
  \node[e4c node] (6) at (\x+-0.8, \y+-0.8) {$v$};

  \node[e4c node,draw=none] (e) at (\x+-0.9, \y+-0.2) {(a)};

  \path[draw,thick]
  (1) edge[e4c edge]  (2)
  (2) edge[e4c edge]  (3)
  (2) edge[e4c edge]  (4)
  (4) edge[e4c edge]  (5)
  (4) edge[e4c edge]  (6)
  ;

  \def\x{2}
  \def\y{0}
  \node[e4c node] (1) at (\x+-0.2, \y+-0.2) {$r$};
  \node[e4c node] (2) at (\x+-0.4, \y+-0.4) {$y$};
  \node[e4c node] (3) at (\x+-0.0, \y+-0.8) {$w$};
  \node[e4c node] (4) at (\x+-0.6, \y+-0.6) {$x$};
  \node[e4c node] (5) at (\x+-0.4, \y+-0.8) {$u$};
  \node[e4c node] (6) at (\x+-0.8, \y+-0.8) {$v$};

  \node[e4c node,draw=none] (e) at (\x+-0.9, \y+-0.2) {(b)};

  \path[draw,thick]
  (1) edge[e4c edge]  (2)
  (2) edge[e4c edge]  (3)
  (2) edge[e4c edge]  (4)
  (4) edge[e4c edge]  (5)
  (4) edge[e4c edge]  (6)
  ;

  \def\x{4}
  \def\y{0}
  \node[e4c node] (1) at (\x+-0.2, \y+-0.2) {$r$};
  \node[e4c node] (2) at (\x+-0.4, \y+-0.4) {$x$};
  \node[e4c node] (3) at (\x+-0.0, \y+-0.8) {$w$};
  \node[e4c node] (4) at (\x+-0.2, \y+-0.6) {$z$};
  \node[e4c node] (5) at (\x+-0.4, \y+-0.8) {$u$};
  \node[e4c node] (6) at (\x+-0.8, \y+-0.8) {$v$};

  \node[e4c node,draw=none] (e) at (\x+-0.9, \y+-0.2) {(c)};

  \path[draw,thick]
  (1) edge[e4c edge]  (2)
  (2) edge[e4c edge]  (4)
  (4) edge[e4c edge]  (3)
  (4) edge[e4c edge]  (5)
  (2) edge[e4c edge]  (6)
  ;

  \def\x{6}
  \def\y{0}
  \node[e4c node] (1) at (\x+-0.2, \y+-0.2) {$r$};
  \node[e4c node] (2) at (\x+-0.4, \y+-0.4) {$x$};
  \node[e4c node] (3) at (\x+-0.0, \y+-0.8) {$w$};
  \node[e4c node] (5) at (\x+-0.4, \y+-0.8) {$u$};
  \node[e4c node] (6) at (\x+-0.8, \y+-0.8) {$v$};

  \node[e4c node,draw=none] (e) at (\x+-0.9, \y+-0.2) {(d)};

  \path[draw,thick]
  (1) edge[e4c edge]  (2)
  (2) edge[e4c edge]  (3)
  (2) edge[e4c edge]  (5)
  (2) edge[e4c edge]  (6)
  ;
\end{tikzpicture}
\caption{An illustration for the proof of Proposition~\ref{proposition:trees_no_cycle}.}
\label{fig:trees_uvw_transitivity}
\end{figure}
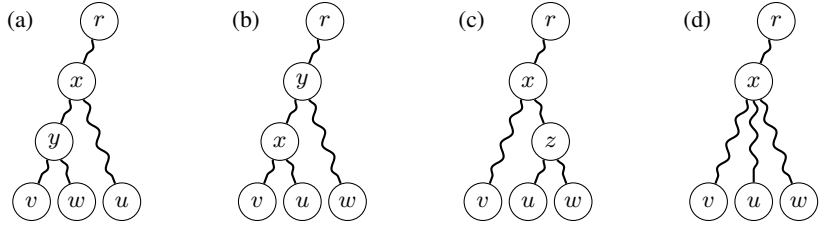

Proposition~\ref{proposition:trees_no_cycle} implies that there is no Condorcet cycle in a tree.
Also, it implies that it is possible to define a centrality measure that satisfies Weak General CCT which we do in the following proposition (see Figure~\ref{fig:trees_new_measure} for an illustration).

\begin{proposition}\label{proposition:trees_measure_exists}
Let $LT^v(G) = (t_1,\dots,t_k)$ be a list of values $\{|T_u^v| : u \in V \setminus \{v\}, |T_u^v| > n/2\}$ sorted increasingly.
A centrality measure $W$ defined for every graph $G = (V,E)$ and node $v$ as follows:
\[ W_v(G) = \left(\sum_{i=1}^k t_i \cdot n^{k+1-i} \right)^{-1} \]
if $k \ge 1$ and $W_v(G) = 1$ if $k=0$ satisfies Weak General CCT.
\end{proposition}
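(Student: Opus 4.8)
The plan is to recognize that $W_v(G)^{-1}$ is, up to a factor common to all nodes, the base-$n$ numeral whose digits are the entries of $LT^v(G)$, and then to reduce the target implication $u \succ v \Rightarrow W_u(G) > W_v(G)$ to an elementary comparison of two such numerals via the characterization of $\succeq$ in Lemma~\ref{lemma:relation_dist}. Concretely, writing $N_v = \sum_{i=1}^{k} t_i \cdot n^{k-i}$ for the numeral whose digits are $t_1 < \dots < t_k$ (most significant first), the displayed formula gives $W_v(G)^{-1} = n \cdot N_v$ when $k \ge 1$, and $W_v(G)=1$ when $k=0$; since the factor $n$ is identical for every node, it suffices to prove $u \succ v \Rightarrow N_u < N_v$, reading the centroid case $k=0$ (where $W=1$ is maximal) as $N_v=0$.

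First I would establish the structural fact that the heavy subtrees of $v$ --- those $u$ with $|T_u^v| > n/2$ --- are exactly the nodes on the path from $v$ to the Condorcet winner (or the nearer weak Condorcet winner) $r$. Rooting the tree at $r$, one checks that along this path $v = p_0, p_1, \dots, p_m = r$ one has $|T_{p_i}^v| = n - \sigma(p_{i-1})$, where $\sigma(x)$ is the size of the subtree rooted at $x$. Because $r$ is a centroid, every branch hanging off this path has size at most $n/2$ (and, in the weak case, the subtree reached by crossing the central edge has size exactly $n/2$, hence is also excluded), so each $|T_{p_i}^v|$ exceeds $n/2$ while everything branching off the path is excluded; thus the chain stops exactly at $r$. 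This yields two facts I will use: $k = l(v)$, so $N_v$ has exactly $l(v)$ digits; and, read from most to least significant, the digits of $N_v$ are $|T_{p_m}^v|, |T_{p_{m-1}}^v|, \dots, |T_{p_1}^v|$, i.e. the subtree sizes met walking from $r$ down to $v$. Crucially, every digit lies strictly between $n/2$ and $n$, so the base $n$ strictly exceeds all digits: there are no carries, and comparing two such numerals reduces to (i) comparing their lengths, and for equal lengths (ii) comparing them lexicographically from the most significant digit.

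With this dictionary, Lemma~\ref{lemma:relation_dist} splits $u \succ v$ into two cases. If $l(u) < l(v)$, then $N_u$ has strictly fewer digits than $N_v$; as all digits are nonzero, the shorter numeral is strictly smaller, so $N_u < N_v$ and $W_u(G) > W_v(G)$ (the subcase $l(u)=0$, where $u$ is the centroid with $W_u=1$, is subsumed). If $l(u) = l(v)$, then $u \succ v$ forces $d(u,v)$ even, and Lemma~\ref{lemma:relation_dist} gives $|T_w^u| < |T_w^v|$ for the midpoint $w$; moreover both numerals now have the same length $m = l(u) = l(v)$. Here $w$ is exactly the node where the paths from $u$ and from $v$ to $r$ merge, so the two numerals agree on every digit of higher significance than the position of $w$ (the shared upper portion of the paths, where the $\sigma$-values coincide), while at the position of $w$ the digit of $N_u$ is $|T_w^u|$ and that of $N_v$ is $|T_w^v|$. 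Since $|T_w^u| < |T_w^v|$, the numerals first differ at this position with $N_u$ carrying the smaller digit, whence $N_u < N_v$ and $W_u(G) > W_v(G)$.

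The main obstacle is the structural identification in the second paragraph: proving that the heavy subtrees line up precisely along the path to the centroid (so that $k = l(v)$), and that the digit of $N_v$ sitting at the merge point $w$ is exactly $|T_w^v|$. Once this correspondence between the quantities of Lemma~\ref{lemma:relation_dist} and the digits of $N_v$ is in place, both cases are immediate, and the purpose of the base $n$ --- chosen strictly larger than any admissible digit --- becomes transparent: it is what simultaneously makes ``more digits'' mean ``larger value'' (handling the level comparison) and makes numeral comparison purely lexicographic (handling the head-to-head comparison at a fixed level).
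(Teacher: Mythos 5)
Your proposal is correct and follows essentially the same route as the paper's proof: both reduce the comparison of $W$-values to a shortlex/base-$n$ positional comparison of the heavy-subtree lists, both identify the heavy subtrees $\{u : |T^v_u| > n/2\}$ with the nodes on the path from $v$ to the (weak) Condorcet winner so that the list length equals $l(v)$, and both then split into the cases $l(u)<l(v)$ and $l(u)=l(v)$, handling the latter via Lemma~\ref{lemma:relation_dist} and the fact that the midpoint $w$ is the merge point of the two paths to $r$, where the lists first differ with $|T^u_w| < |T^v_w|$. Your numeral-with-digits phrasing is just a more explicit rendering of the paper's shortlex argument, so there is nothing substantive to add.
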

\begin{proof}
The formula for $W_v(G)$ can be written as follows:
\[ W_v^{-1}(G) = t_1 n^k + t_2 n^{k-1} + \dots, t_k n. \]
We know that $n/2 < t_1,\dots,t_k < n$.
Hence, if $|LT^u(G)| < |LT^v(G)|$, then $W_u^{-1}(G) < W_v^{-1}(G)$ which implies $W_u(G)>W_v(G)$.
Now, if $|LT^u(G)| = |LT^v(G)|$ and $i$ is the first position on which both sequences differ, then if $t$-th value of $LT^u(G)$ is smaller, then $W_u^{-1}(G) < W_v^{-1}(G)$; in turn, if $i$-th value of $LT^v(G)$ is smaller, then $W_u^{-1}(G) > W_v^{-1}(G)$.

This can be formalized using the shortlex order: for two sequence of real values, $L_1, L_2$, we say that $L_1 <_{slex} L_2$ if $|L_1|<|L_2|$ or $|L_1|=|L_2|$ and $L_1$ is lexicographically smaller than $L_2$.

So far, we proved that 
\[ LT^u(G) <_{slex} LT^v(G) \Rightarrow W_u(G)>W_v(G). \]
Hence, it remains to prove that 
\[ u \succ v \Rightarrow LT^u(G) <_{slex} LT^v(G). \]

Let us first argue that the level of a node is implied by the set $\{T^v_u\}_{u \in V}$.
If there exists a Condorcet winner $r$, then every subtree rooted at its neighbors must contain less than $n/2$ nodes.
This implies $|T^v_r| > n/2$.
Analogous statement is true for neighbors of weak Condorcet winners, $r,r'$.
Hence, if $v$ is closer to $r$ than to $r'$ we get $|T^v_r| > n/2$ and $|T^v_{r'}| = n/2$.
This implies that if $u$ belongs to the path from $v$ to $r$ (i.e., the Condorcet winner or the closer weak Condorcet winner), then $|T^v_u| > n/2$. 
In turn, if $u$ does not belong to this path, then $|T^v_u| \le n/2$.
Hence, we get that:
\[ l(v) = \{u \in V \setminus \{v\} : |T^v_u| > n/2\}. \]

This implies that if $l(u) < l(v)$, then $LT^u(G) <_{slex} LT^v(G)$.
Also, it implies that if $l(u) = l(v)$, then $|LT^u(G)| = |LT^v(G)|$.
Hence, we need to prove that if $l(u) = l(v)$ and $u \succ v$, then $LT^u(G)$ is lexicographically smaller than $LT^v(G)$.

If $d(u,v)$ is odd, then we know that $u \sim v$.
Hence, we can assume $d(u,v)$ is even, i.e., there exists one Condorcet winner $r$ or two weak Condorcet winners $r,r'$, but $u$ and $v$ are both closer to one of them: $r$.
Let $w$ be the middle node on the path between them. 
Since $u \succ v$ we know that $|T^u_w| < |T^v_w|$.
For every node $w' \neq w$ on the path from $w$ to $r$ value $|T^u_{w'}|=|T^v_{w'}|$ is smaller than both $|T^u_w|$ and $|T^v_w|$.
Also, for every node $w' \neq w$ on the path from $u$ to $w$ we have $|T^u_{w'}| > |T^u_w|$; analogically, for every node $w' \neq w$ on the path from $v$ to $w$ we have $|T^v_{w'}| > |T^v_w|$.
Hence, $|T^u_w|$ and $|T^v_w|$ are the smallest values in $LT^u(G)$ and $LT^v(G)$, respectively, on which both lists differ.
Hence, we get that $LT^u(G) <_{slex} LT^v(G)$.
This concludes the proof.
\end{proof}

We end this section by observing that both Closeness and RW-Closeness centralities depend only on the set $\{T^v_u\}_{u \in V}$.

\begin{proposition}\label{proposition:formulas_based_on_t}
For every tree $G = (V,E)$ and node $v$ it holds:
\[ C_v^{-1}(G) = \sum_{u \in V \setminus \{v\}} |T_u^v|, \quad RWC_v^{-1}(G) = \sum_{u \in V \setminus \{v\}} |T_u^v|(2|T_u^v|-1). \]
\end{proposition}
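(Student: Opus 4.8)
The plan is to root the tree at $v$ and recognize both identities as instances of a single edge-counting (swap-of-summation) argument, reusing the hitting-time computation already carried out in the proof of Theorem~\ref{theorem:trees_rwc_consistent}.

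First I would set up the geometry. Root $G$ at $v$. For a non-root node $u$, the set $T_u^v$ is precisely the subtree hanging below $u$, and the edge $e_u$ joining $u$ to its parent $p(u)$ is a bridge with $K(G - e_u) = \{T_u^v, V \setminus T_u^v\}$, where $u \in T_u^v$ and $v \in V \setminus T_u^v$. The crucial book-keeping fact is then: for any node $w$, the edge $e_u$ lies on the unique path from $w$ to $v$ if and only if $w \in T_u^v$. Consequently the set of edges on the $w$-to-$v$ path is exactly $\{e_u : w \in T_u^v\}$, and there are $n-1$ such bridges in total, one per non-root node.

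For the Closeness identity I would write $d(w,v)$ as the number of edges on the path from $w$ to $v$, sum over $w \neq v$, and swap the order of summation to range over the bridges $e_u$. By the book-keeping fact, $e_u$ is counted once for every $w \in T_u^v$, so it contributes $|T_u^v|$, giving $C_v^{-1}(G) = \sum_{w \neq v} d(w,v) = \sum_{u \neq v} |T_u^v|$. For the RW-Closeness identity I would run the same argument one level up. Since every edge of a tree is a bridge, the walk started at $w$ must traverse the path to $v$ vertex by vertex, so hitting times are additive along that path: $H(w,v) = \sum_{e_u \text{ on path } w \to v} H(u, p(u))$. The per-edge value is exactly the quantity computed in the proof of Theorem~\ref{theorem:trees_rwc_consistent}: for the bridge $e_u$ we have $H(u, p(u)) = 2|T_u^v| - 1$, since there the component $S_u$ of $u$ in $G - e_u$ is precisely $T_u^v$. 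Summing $H(w,v)$ over $w \neq v$ and again swapping the order of summation over the bridges, $e_u$ contributes for exactly the $|T_u^v|$ nodes $w \in T_u^v$, each time adding $2|T_u^v| - 1$; hence $RWC_v^{-1}(G) = \sum_{u \neq v} |T_u^v|(2|T_u^v| - 1)$.

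I do not expect a genuine obstacle here: both statements reduce to the same swap of summation over the bridges $e_u$, and the only non-elementary ingredient, namely the value $H(u, p(u)) = 2|T_u^v| - 1$ together with additivity of hitting times, is already established earlier. The one point deserving a sentence of care is the identification $S_u = T_u^v$ and the additivity of hitting times along the path, both of which are immediate from the tree structure and the fact that each $e_u$ is a bridge the walk cannot avoid.
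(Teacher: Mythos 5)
Your proof is correct and is essentially the paper's own argument: both identities follow from the same swap of summation over the bridges $e_u$ (equivalently, over the subtrees $T_u^v$), with $d(w,v)$ counted as the number of such bridges on the path for Closeness, and with the per-edge hitting time $H(u,p(u)) = 2|T_u^v|-1$ imported from the proof of Theorem~\ref{theorem:trees_rwc_consistent} plus additivity of hitting times along the path for RW-Closeness. If anything, your write-up is slightly more careful than the paper's, which abuses notation by writing $H(u,v)$ where the hitting time to the parent $H(u,p(u))$ is meant.
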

\begin{proof}
For Closeness centrality, we have:
\[ C_v^{-1}(G) = \sum_{u \in V \setminus \{v\}} |T_u^v| = \sum_{u \in V \setminus \{v\}} \sum_{w \in T^v_u} 1 = \sum_{w \in V \setminus \{v\}} |\{u \in V \setminus \{v\} : w \in T^v_u \}|. \]
Node $w$ belongs to a tree $T^v_u$ if and only if $u$ lays on the path from $w$ to $v$.
Hence, it belongs to $d(w,v)$ such trees and we get the original formula for Closeness centrality.

For RW-Closeness centrality, note that for an edge $\{u,v\}$ if $K(G-\{u,v\}) = \{S_u,S_v\}$, $u \in S_u$ and $v \in S_v$, then $S_u = T^v_u$ and $S_v = T^u_v$.
Hence, from Theorem~\ref{theorem:trees_rwc_consistent} we get that $H(u,v) = 2|T^v_u|-1$.

Now, we have:
\begin{multline*} 
RWC_v^{-1}(G) = \sum_{u \in V \setminus \{v\}} |T_u^v| (2|T^v_u|-1)
= \sum_{u \in V \setminus \{v\}} |T_u^v| H(u,v) \\
= \sum_{u \in V \setminus \{v\}} \sum_{w \in T^v_u} H(u,v) 
= \sum_{w \in V \setminus \{v\}} \sum_{\substack{u \in V \setminus \{v\}\\ w \in T^v_u}} H(u,v).
\end{multline*}

Node $w$ belongs to a tree $T^v_u$ if and only if $u$ lays on the path from $w$ to $v$.
Let $(w,u_1,\dots,u_k,v)$ be this path.
We get:
\[ RWC_v^{-1}(G) = \sum_{w \in V \setminus \{v\}} (H(w,u_1) + \dots + H(u_k,v)) \]
which simplifies to $H(w,v)$ and concludes the proof.
\end{proof}

\begin{figure*}[b]
\centering
\begin{tikzpicture}[x=6cm,y=5.5cm] 
  \tikzset{     
    e4c node/.style={circle,draw,minimum size=0.5cm,inner sep=0}, 
    e4c edge/.style={sloped,above,font=\footnotesize}
  }
  
  \node[e4c node,minimum size=0.58cm] (1) at (0.2, 1.6) {};
 
  \node[e4c node] (6) at (0.185, 1.885) {};
  \node[e4c node] (5) at (0.245, 1.81) {};
  \node[e4c node] (4) at (0.31, 1.74) {};
  \node[e4c node] (1) at (0.2, 1.6) {$u$};
  \node[e4c node] (2) at (0.5, 1.80) {$w$};
  \node[e4c node] (7) at (0.5, 1.70) {};
  \node[e4c node] (8) at (0.5, 1.60) {};
  \node[e4c node] (9) at (0.5, 1.50) {};
  \node[e4c node] (10) at (0.5, 1.40) {};
  \node[e4c node] (3) at (0.8, 1.60) {$v$};
  \node[e4c node] (11) at (0.95, 1.54) {};
  \node[e4c node] (12) at (0.95, 1.66) {};

  \node at (0.2, 1.51) {\small $(8,1,2)$};
  \node at (0.8, 1.51) {\small $(7,4)$};

  \node at (0.8, 1.8) {$G$};

  \path[draw,thick]
  (1) edge[e4c edge]  (2)
  (2) edge[e4c edge]  (3)
  (1) edge[e4c edge]  (4)
  (1) edge[e4c edge]  (5)
  (1) edge[e4c edge]  (6)
  (2) edge[e4c edge]  (4)
  (2) edge[e4c edge]  (5)
  (2) edge[e4c edge]  (6)
  (1) edge[e4c edge]  (7)
  (1) edge[e4c edge]  (8)
  (1) edge[e4c edge]  (9)
  (1) edge[e4c edge]  (10)
  (3) edge[e4c edge]  (12)
  (3) edge[e4c edge]  (7)
  (3) edge[e4c edge]  (8)
  (3) edge[e4c edge]  (9)
  (3) edge[e4c edge]  (10)
  (3) edge[e4c edge]  (11)
  ;
  
  \def\x{1};

  \node[e4c node,minimum size=0.58cm] (1) at (\x+0.8, 1.6) {};

  \node[e4c node] (5) at (\x+0.245, 1.81) {};
  \node[e4c node] (4) at (\x+0.31, 1.74) {};
  \node[e4c node] (1) at (\x+0.2, 1.6) {$u$};
  \node[e4c node] (2) at (\x+0.5, 1.80) {$w$};
  \node[e4c node] (7) at (\x+0.5, 1.70) {};
  \node[e4c node] (8) at (\x+0.5, 1.60) {};
  \node[e4c node] (9) at (\x+0.5, 1.50) {};
  \node[e4c node] (10) at (\x+0.5, 1.40) {};
  \node[e4c node] (3) at (\x+0.8, 1.60) {$v$};
  \node[e4c node] (6) at (\x+0.95, 1.60) {};
  \node[e4c node] (11) at (\x+1.1, 1.54) {};
  \node[e4c node] (12) at (\x+1.1, 1.66) {};

  \node at (\x+0.2, 1.51) {\small $(8,1,2)$};
  \node at (\x+0.8, 1.51) {\small $(7,4)$};

  \node at (\x+0.95, 1.8) {$G'$};

  \path[draw,thick]
  (1) edge[e4c edge]  (2)
  (2) edge[e4c edge]  (3)
  (1) edge[e4c edge]  (4)
  (1) edge[e4c edge]  (5)
  (2) edge[e4c edge]  (4)
  (2) edge[e4c edge]  (5)
  (1) edge[e4c edge]  (7)
  (1) edge[e4c edge]  (8)
  (1) edge[e4c edge]  (9)
  (1) edge[e4c edge]  (10)
  (3) edge[e4c edge]  (7)
  (3) edge[e4c edge]  (8)
  (3) edge[e4c edge]  (9)
  (3) edge[e4c edge]  (10)
  (3) edge[e4c edge] (6)
  (6) edge[e4c edge] (11)
  (6) edge[e4c edge] (12)
  (1) edge[e4c edge, bend left=120, looseness = 1.7] (3)
  ;
\end{tikzpicture}
\caption{Node $u$ and node $v$ have the same lists of distances in both graphs (we list them below the nodes).
Node $v$ has the highest Closeness centrality in both graphs.
However, node $u$ is the Condorcet winner in graph $G$ and node $v$ in graph $G'$.}
\label{fig:graphs_no_distance_based}
\end{figure*}
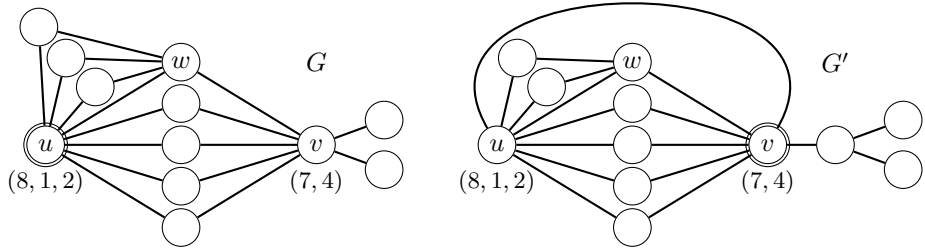

\section{Closeness in General Graphs}

Let us turn our attention to arbitrary graphs.

We start by showing that Closeness centrality satisfies \emph{Condorcet Comparison} also in general graphs.
\begin{definition} (Condorcet Comparison (CC)) \\
For every graph $G = (V,E)$ and edge $\{u,v\} \in E$ it holds:
\[ u \succeq v \Leftrightarrow F_u(G) \ge F_v(G). \]
\end{definition}
The axiom is a direct generalization of Condorcet Comparison in Trees.
Also, it implies the Bridge axiom.

\begin{theorem}\label{theorem:graphs_closeness_satisfies}
Closeness centrality satisfies Condorcet Comparison.
\end{theorem}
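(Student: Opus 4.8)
The plan is to establish, by direct computation, the exact analogue of Equation~\eqref{eq:closeness_in_trees} for an arbitrary edge, namely the identity
\[ C_v^{-1}(G) - C_u^{-1}(G) = Net(u,v) - Net(v,u), \]
from which both directions of the equivalence follow at once. First I would rewrite the difference of the reciprocal closeness values as a single sum over all of $V$: since $d(u,u)=d(v,v)=0$, the terms omitted from the two defining sums contribute nothing, so
\[ C_v^{-1}(G) - C_u^{-1}(G) = \sum_{w \in V} \big( d(w,v) - d(w,u) \big). \]
The whole argument then hinges on controlling each summand.

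The key observation is that adjacency pins every per-node distance difference to the set $\{-1,0,1\}$. Indeed, because $\{u,v\} \in E$ we have $d(u,v)=1$, so the triangle inequality gives $d(w,u) \le d(w,v)+1$ and $d(w,v) \le d(w,u)+1$, hence $|d(w,u)-d(w,v)| \le 1$ for every $w \in V$ (here I use that $G$ is connected, as required for $C$ to be defined, so all distances are finite). Consequently $d(w,v)-d(w,u)$ equals $+1$ exactly when $d(w,u)<d(w,v)$, i.e.\ when $w$ prefers $u$ to $v$; it equals $-1$ exactly when $w$ prefers $v$ to $u$; and it equals $0$ otherwise. Summing, the $+1$ contributions count $Net(u,v)$ and the $-1$ contributions count $Net(v,u)$, yielding the claimed identity. (As a sanity check, the endpoints behave correctly: $w=u$ contributes $+1$ and is indeed counted in $Net(u,v)$, while $w=v$ contributes $-1$ and is counted in $Net(v,u)$.)

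With the identity in hand, the equivalence is immediate: $u \succeq v$ means $Net(u,v) \ge Net(v,u)$, which by the identity is equivalent to $C_v^{-1}(G) \ge C_u^{-1}(G)$, and since both closeness values are positive, inverting reverses the inequality to give $C_u(G) \ge C_v(G)$. I do not expect a genuine obstacle here; the only points requiring care are the direction-reversal when passing from the reciprocals back to $C$, and the tacit use of connectedness to keep distances finite so that the bound $|d(w,u)-d(w,v)|\le 1$ is valid. The essential content is simply that an edge forces each voter's distance preference between its two endpoints to be decided by a gap of at most one, which is precisely what makes the signed sum collapse to $Net(u,v)-Net(v,u)$.
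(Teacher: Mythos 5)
Your proof is correct and follows essentially the same route as the paper's: both rest on the observation that adjacency forces $|d(w,u)-d(w,v)|\le 1$ for every $w$, so the difference $C_v^{-1}(G)-C_u^{-1}(G)$ collapses to $Net(u,v)-Net(v,u)$, from which the equivalence follows. The paper states this identity without elaboration; you simply supply the details (triangle inequality, the counting of $+1$ and $-1$ contributions, and the sign reversal when inverting), which are exactly the ones the paper leaves implicit.
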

\begin{proof}
Since $u$ and $v$ are connected by an edge, for every node $w$ the distance to node $u$ is either one smaller, one larger or equal to the distance to node $v$.
Hence, we get:
\[ C_v^{-1}(G) - C_u^{-1}(G) = Net(u,v) - Net(v,u) \] 
which concludes the proof.
\end{proof}

As before, Condorcet Comparison applies only to nodes $u$ and $v$ connected by an edge.
In general graphs, without such an assumption, the axiom could not be satisfied by any centrality (even in its weakest version: $u \succ v \Rightarrow F_u(G) > F_v(G)$) as implied by the existence of a Condorcet cycle.

As it turns out, Condorcet Comparison does not imply Condorcet consistency on general graphs.
In the following result, we show that Closeness centrality is not Condorcet consistent in general.
Moreover, no other distance-based centrality is Condorcet consistent.

\begin{theorem}
Closeness centrality and all other distance-based centralities are not Condorcet consistent.
\end{theorem}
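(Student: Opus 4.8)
The plan is to exploit the defining weakness of distance-based centralities: such a measure $F$ satisfies $F_v(G) = f(A(v))$, so it cannot distinguish two nodes that share the same list of distances, even across different graphs. The strategy is therefore to produce a single pair of graphs on the same vertex set in which two fixed nodes $u$ and $v$ retain identical lists of distances, yet the identity of the Condorcet winner flips between the two graphs. The construction in Figure~\ref{fig:graphs_no_distance_based} is exactly such a pair: $A(u) = (8,1,2)$ and $A(v) = (7,4)$ in both $G$ and $G'$, but $u$ is the Condorcet winner in $G$ while $v$ is the Condorcet winner in $G'$.

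Given such a pair, the argument is immediate. First I would note that because the list of distances of $u$ is the same in $G$ and $G'$, any distance-based measure gives $F_u(G) = f(A(u)) = F_u(G')$, and likewise $F_v(G) = F_v(G')$. Next, suppose for contradiction that a distance-based $F$ is Condorcet consistent. Applying the definition to $G$, where $u$ is the Condorcet winner, forces $\mathrm{Top}_F(G) = \{u\}$ and hence $F_u(G) > F_v(G)$. Applying it to $G'$, where $v$ is the Condorcet winner, forces $F_v(G') > F_u(G')$. Substituting the two equalities above rewrites the second inequality as $F_v(G) > F_u(G)$, directly contradicting the first. Since Closeness centrality is itself distance-based, the same argument covers it, giving the full statement.

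The bulk of the actual work — and the only real obstacle — is verifying the two facts asserted about the figure: that the distance lists of $u$ and $v$ genuinely coincide across $G$ and $G'$, and that the Condorcet winner really is $u$ in $G$ and $v$ in $G'$. The equality of distance lists is a matter of reading off distances from the two drawings. Confirming the Condorcet winners is more laborious, since by definition one must check, for each candidate winner, the head-to-head comparison $Net(\cdot,\cdot)$ against every one of the remaining nodes. Here the near-symmetry of the gadget — the block of middle nodes adjacent to both $u$ and $v$, together with the small private subtrees hanging off each — is the feature I would lean on to make all these comparisons break the same way and keep the case analysis short. Once those two structural facts are established, the contradiction needs no further computation.
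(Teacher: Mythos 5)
Your proposal is correct and follows essentially the same route as the paper: both rely on the pair $G, G'$ from Figure~\ref{fig:graphs_no_distance_based}, the fact that $u$ and $v$ keep the same lists of distances across the two graphs, and the flip of the Condorcet winner, which makes $\mathrm{Top}_F(G)=\{u\}$ and $\mathrm{Top}_F(G')=\{v\}$ jointly impossible for any distance-based $F$. The only (harmless) difference is that the paper additionally exhibits Closeness's failure concretely by computing $C_u^{-1}(G)=16 > C_v^{-1}(G)=15$ in $G$, whereas you subsume Closeness under the general contradiction argument, which suffices since Closeness is distance-based.
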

\begin{proof}
Consider the graph $G$ from Figure~\ref{fig:graphs_no_distance_based}.
It is easy to check that node $u$ is a Condorcet winner.
In turn, node $v$ has the highest Closeness centrality: $C_v^{-1}(G) = 15$ while $C_u^{-1}(G) = 16$ and $C_w^{-1}(G) = 17$.

Now, to show that no distance-based centrality is Condorcet consistent, consider the graph $G'$ from Figure~\ref{fig:graphs_no_distance_based}.
Node $v$ is a Condorcet winner in $G'$.
However, note that nodes $u$ and $v$ both have the same set of distances to other nodes in graph $G$ as in graph $G'$. 
This implies that if $F$ is a distance-based centrality, then $F_u(G) = F_u(G')$ and $F_v(G) = F_v(G')$.
Hence, it is impossible that $\mathrm{Top}_F(G) = \{u\}$ and $\mathrm{Top}_F(G') = \{v\}$.
\end{proof}

Interestingly, RW-Closeness rank first the Condorcet winner in both graphs from Figures~\ref{fig:graphs_no_distance_based}.
However, it is not Condorcet consistent in general, as can be seen in Figure~\ref{fig:preliminaries}.
It is because the random walk is more often in more dense parts of the graph. 
In the graph from Figure~\ref{fig:preliminaries}, it will take more time to go from the left-hand side of the graph to node $v$ than from the right-hand side of the graph to node $u$.
In particular, the random walk requires on average $17$ steps to go from $u$ to $v$, but only $13$ steps to go from $v$ to $u$.
That is why node $u$ is considered more important according to RW-Closeness.

The following proposition formalizes this observation.

\begin{proposition}\label{proposition:graphs_rwc}
RW-Closeness centrality does not satisfy Condorcet Comparison (and Bridge).
Specifically, for every graph $G = (V,E)$ and edge $\{u,v\}$ s.t. $K(G-\{u,v\}) = \{S_u,S_v\}$, $u \in S_u$, $v \in S_v$ it holds:
\[ RWC^{-1}_v(G) - RWC^{-1}_u(G) = |S_u|(2 E[S_u] + 1) - |S_v|(2 E[S_v] + 1). \]
\end{proposition}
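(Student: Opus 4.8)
The plan is to mirror the argument of Theorem~\ref{theorem:trees_rwc_consistent}, replacing the tree-specific edge count $|S_u|-1$ by the general count $E[S_u]$ (which I read as the number of edges of the induced subgraph $G[S_u]$). First I would exploit the structure forced by $K(G-\{u,v\}) = \{S_u,S_v\}$: the edge $\{u,v\}$ is then a bridge and is the \emph{only} edge joining $S_u$ to $S_v$. Hence a random walk starting at any $w \in S_u$ can reach $v$ only by first reaching $u$ and crossing that edge, so $H(w,v) = H(w,u) + H(u,v)$; symmetrically $H(w,u) = H(w,v) + H(v,u)$ for every $w \in S_v$.

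Next I would substitute these decompositions into the two sums defining $RWC^{-1}$. Splitting $V\setminus\{v\}$ into $S_u$ and $S_v\setminus\{v\}$ and using $H(u,u)=0$ gives
\[ RWC_v^{-1}(G) = \sum_{w \in S_u \setminus \{u\}} H(w,u) + |S_u|\,H(u,v) + \sum_{w \in S_v \setminus \{v\}} H(w,v), \]
and the analogous split of $V\setminus\{u\}$ yields
\[ RWC_u^{-1}(G) = \sum_{w \in S_u \setminus \{u\}} H(w,u) + \sum_{w \in S_v \setminus \{v\}} H(w,v) + |S_v|\,H(v,u). \]
The two ``internal'' sums appear in both expressions and cancel in the difference, leaving the clean identity
\[ RWC_v^{-1}(G) - RWC_u^{-1}(G) = |S_u|\,H(u,v) - |S_v|\,H(v,u). \]

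It then remains to compute $H(u,v)$, the computation of $H(v,u)$ being symmetric. As in the tree proof, reaching $v$ for the first time involves no node of $S_v\setminus\{v\}$, and since the walk's transition probabilities on $S_u$ and into $v$ are unchanged by deleting $S_v\setminus\{v\}$, I would restrict attention to $G[S_u \cup \{v\}]$. In this subgraph $v$ is a leaf, as its only incident edge is the bridge $\{u,v\}$. The subgraph has $E[S_u]+1$ edges, so by the return-time formula $2|E|/\deg(v)$ the expected return time of $v$ equals $2(E[S_u]+1)$; since the first step out of $v$ is forced to $u$, this return time is $1 + H(u,v)$, whence $H(u,v) = 2E[S_u]+1$ and symmetrically $H(v,u) = 2E[S_v]+1$. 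Substituting into the displayed identity gives the claim. As a sanity check, in a tree $G[S_u]$ has $|S_u|-1$ edges, so $H(u,v) = 2|S_u|-1$, recovering Equation~\eqref{eq:proof_rwc}.

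The main obstacle is the edge bookkeeping in the return-time step: one must justify that $H(u,v)$ is genuinely independent of the $S_v$ side, so that it may be evaluated in $G[S_u\cup\{v\}]$, and that the relevant edge count is $E[S_u]+1$ rather than $E[S_u]$. Both facts rest on the bridge property — that $\{u,v\}$ is the unique edge between $S_u$ and $S_v$ — exactly as they did in the tree case where $G[S_u\cup\{v\}]$ was itself a tree with $v$ a leaf.
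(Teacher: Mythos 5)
Your proposal is correct and follows essentially the same route as the paper's proof: decompose the hitting times across the bridge to reduce the difference to $|S_u|H(u,v) - |S_v|H(v,u)$, then compute $H(u,v) = 2|E[S_u]|+1$ by restricting to $G[S_u \cup \{v\}]$ (where $v$ is a leaf) and applying the expected-return-time formula, exactly as in Theorem~\ref{theorem:trees_rwc_consistent}. The only cosmetic difference is that you spell out both sums and their cancellation explicitly, and, like the paper, you leave the actual failure of Condorcet Comparison to the observation that the identity permits $|S_u|<|S_v|$ while $|S_u|(2|E[S_u]|+1) > |S_v|(2|E[S_v]|+1)$ (the paper points to Figure~\ref{fig:preliminaries} for a concrete instance).
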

\begin{proof}
Fix $\{u,v\} \in E$ and let $K(G-\{u,v\}) = \{S_u, S_v\}$ s.t. $u \in S_u$, $v \in S_v$.
We have:
\[ RWC_v^{-1}(G) =  \sum_{w \in S_u} (H(w,u) + H(u,v)) + \sum_{w \in S_v} H(w,v). \]

Now, let us compute the hitting time $H(u,v)$.
Since $H(u,v)$ concerns the first time the random walk enters node $v$, we can concentrate on the induced subgraph $G[S_u \cup \{v\}]$. 
Consider the expected return time of node $v$ in this subgraph.
Graph $G[S_u \cup \{v\}]$ contains $|E[S_u]|+1$ edges and $v$ is a leaf.
Hence, the expected return time of node $v$ in graph $G[S_u \cup \{v\}]$ equals $2|E[S_u]|+2$.
Now, if the random walk starts in $u$, it will reach node $v$ one step earlier.
Hence, we get that $H(u,v) = 2|E[S_u]|+1$.

Overall, we get that:
\[ RWC_v^{-1}(G) - RWC_u^{-1}(G) = |S_u|(2|E[S_u]|+1) - |S_v|(2|E[S_v]|+1). \]

Now, if $|S_u| < |S_v|$, but $|S_u|(2 E[S_u] + 1) > |S_v|(2 E[S_v] + 1)$, then we have $u \prec v$ and at the same time $RWC_u(G) > RWC_v(G)$.
\end{proof}


Other centrality measures clearly do not satisfy Condorcet Comparison, as they fail to satisfy it already on trees.

In the remainder of this section, we provide a characterization of Closeness centrality ranking based on Condorcet Comparison.
Let us introduce some additional notation concerning lists of distances.
We will say that a list of distances $a = (a_1,\dots,a_k)$ is an $n$-list if $a_1+\dots+a_k = n$.
We define its sum as $S(a) = \sum_{i=1}^k (i a_i)$ and length as $|a| = k$.
For two lists $a = (a_1,\dots,a_k), b = (b_1,\dots,b_l)$ we define $(a+b) = (a_1+b_1,\dots,a_l+b_l,a_{l+1},\dots,a_k)$ if $k \ge l$ and analogously if $k < l$.

First, we introduce a consistency condition on the function $f$ for distance-based centralities.

\begin{definition}
A distance-based centrality based on function $f$ is \emph{regular} if for every two $n$-lists of distances $a,b$ and a list of distances $c$ with $|c| \le |a|,|b|$ it holds:
\[ f(a) \ge f(b) \Leftrightarrow f(a+c) \ge f(b+c). \]
\end{definition}
Regularity states that the comparison between two lists of distances would not change if we add to both lists the same values.
This can be translated to a graph property as comparing one node $v$ in two different graphs $G,G'$.
Now, if a new subgraph is attached to $v$ in both graphs, the same for both, then if $v$ was more central in $G$ than in $G'$, then it will still be more central.

Regularity is satisfied by all standard distance-based centralities, including not yet mentioned \emph{Eccentricity} defined through the function $f(a_1,\dots,a_k) = 1/k$.
However, it is not satisfied by the centrality defined in Equation~\eqref{eq:centrality_zero_to_leafs}.

Now, we characterize the ranking obtained from Closeness centrality among regular distance-based centralities using Condorcet Comparison.

\begin{theorem}\label{theorem:closeness_characterization}
A regular distance-based centrality satisfies Condorcet Comparison if and only if it returns the same ranking as Closeness centrality.
\end{theorem}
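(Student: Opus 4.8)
The plan is to treat the two directions separately; the forward implication (same ranking $\Rightarrow$ Condorcet Comparison) is immediate, and the converse carries all the content. For the easy direction, suppose a regular distance-based $F$ induces the same ranking as Closeness. Condorcet Comparison is a condition purely on how the two endpoints of an edge are ordered relative to each other, and Closeness satisfies it by Theorem~\ref{theorem:graphs_closeness_satisfies}; since $F$ orders every pair exactly as Closeness does, it inherits the property verbatim. So no work is needed here.

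For the converse, fix a regular distance-based $F$ with generating function $f$ satisfying Condorcet Comparison. Because both $F$ and $C$ are distance-based, showing they rank identically reduces to a purely combinatorial claim about $f$: for every $n$ and all $n$-lists $a,b$, one has $f(a)\ge f(b)\Leftrightarrow S(a)\le S(b)$ (recall $C_v^{-1}(G)=S(A(v))$, so larger Closeness means smaller $S$). Every $n$-list is realizable as the distance profile of the root of a suitable tree, so it suffices to prove this statement about $f$ alone, without worrying whether two given lists are simultaneously realizable in one graph.

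The only handle Condorcet Comparison offers is the comparison of adjacent endpoints, so I would feed it the bridge gadget. Given lists $\alpha,\beta$, build a connected graph with a bridge edge $\{u,v\}$ whose $u$-side realizes $\alpha$ (as a rooted tree hung off $u$) and whose $v$-side realizes $\beta$. Writing $c^{+}=(0,c_1,c_2,\dots)$ for the list obtained by pushing every node one step further, and $(1)$ for the single-node list accounting for the edge, one checks that $A(u)=\alpha+(1)+\beta^{+}$ and $A(v)=\beta+(1)+\alpha^{+}$, and that here $u\succeq v$ holds exactly when the $u$-side is at least as large, i.e. $\|\alpha\|\ge\|\beta\|$ (where $\|\cdot\|$ denotes the total node count) — this is precisely the Bridge axiom. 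Hence Condorcet Comparison yields the family of relations
\[ f\bigl(\alpha+(1)+\beta^{+}\bigr)\ \ge\ f\bigl(\beta+(1)+\alpha^{+}\bigr) \iff \|\alpha\|\ge\|\beta\|, \]
where the balanced case $\|\alpha\|=\|\beta\|$ gives $f$-equalities and each strict case a strict $f$-inequality. Since $S(A(v))-S(A(u))=\|\alpha\|-\|\beta\|$, these are exactly the instances of the target claim for this special linked family.

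To reach arbitrary $a,b$ I would reduce to these gadget relations using regularity. It suffices to establish: (I) if $S(a)=S(b)$ then $f(a)=f(b)$, and (II) if $S(b)=S(a)+1$ then $f(a)>f(b)$; the general claim then follows by connecting $b$ to $a$ through single-token moves (moving one node one step closer lowers $S$ by one), applying (II) along the descent and (I) to absorb the equal-$S$ remainder. For (I) I would realize an $S$-preserving rearrangement of $a$ into $b$ as a sequence of balanced swaps $\alpha+(1)+\beta^{+}\leftrightarrow\beta+(1)+\alpha^{+}$ with $\|\alpha\|=\|\beta\|$, after first padding both lists by a common list $c$ (with $|c|\le|a|,|b|$) so that they take the required linked form; regularity guarantees the padding leaves the $f$-comparison unchanged, while Condorcet Comparison makes each balanced swap $f$-neutral. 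For (II) the same padding turns the single move into one unbalanced swap whose strict $f$-inequality regularity transports back to $a$ versus $b$. The main obstacle is exactly this linkage: the two profiles the gadget produces are coupled as $(\alpha+(1)+\beta^{+},\,\beta+(1)+\alpha^{+})$, so one cannot simply set $A(u)=a$ and $A(v)=b$. The real work is choosing the padding $c$ and decomposing an arbitrary $S$-move into padded balanced and unbalanced swaps, then verifying both that such swaps connect all equal-$S$ lists and that the length condition $|c|\le|a|,|b|$ demanded by regularity can always be met.
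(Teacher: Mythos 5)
Your easy direction and your reduction of the hard direction to a purely combinatorial claim about $f$ — namely $f(a)\ge f(b)\Leftrightarrow S(a)\le S(b)$ for $n$-lists — both match the paper exactly. But the hard direction itself is not proved: everything after ``To reach arbitrary $a,b$'' is a plan, and you yourself flag that ``the real work'' — decomposing an arbitrary $S$-preserving rearrangement into feasible padded swaps, and verifying that regularity's length condition $|c|\le|a|,|b|$ can always be met — remains open. That deferred step is the entire content of the theorem. In the paper it occupies four lemmas: a shift lemma (Lemma~\ref{lemma:closeness_shift}) establishing $f$-invariance under the elementary move that decreases $a_i,a_j$ and increases $a_{i-1},a_{j+1}$; existence and uniqueness of a minimal-weight representative $\bot_{S,n}$ (Lemma~\ref{lemma:closeness_minimal_exists}); an induction on weight reducing every list to $\bot_{S(a),n}$ (Lemma~\ref{lemma:closeness_equal_minimal}); and the strict comparison $f(\bot_{S,n})>f(\bot_{S+1,n})$ (Lemma~\ref{lemma:closeness_two_minimal}). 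Your proposal has no analogue of the canonical form, nor of the induction that terminates at it.

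Moreover, your restriction to \emph{bridge} gadgets is a genuine handicap, not a presentational choice. The paper's workhorse gadget is deliberately not a bridge: it is a cycle (two parallel paths joined at both ends, plus a pendant node $w$), later extended by attaching spectator nodes at \emph{equal} distance from both compared nodes. A bridge can never do this: if $\{u,v\}$ is a bridge, every node's distances to $u$ and $v$ differ by exactly one, so the pair $(A(u),A(v))$ is rigidly coupled as $(\alpha+(1)+\beta^{+},\,\beta+(1)+\alpha^{+})$, and all spectators must instead be pushed into regularity's padding $c$ — which must be shorter than both lists and, by the paper's definition of a list of distances, must have positive entries. These constraints bite. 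For instance, the paper's own example of the elementary equal-$S$ move, $a=(3,1,2,4,2)$ versus $b=(3,2,1,3,3)$ (so $i=3$, $j=4$), cannot be realized by any single padded balanced bridge swap: matching the difference forces $\beta=\alpha+e_{2}-e_{4}$, the padding's nonnegativity forces $\alpha_1+\alpha_2\le 1$ and $\alpha_2+\alpha_3\le 1$, while $\beta_4\ge 0$ forces $\alpha_4\ge 1$ and hence, by contiguity of distance lists, $\alpha_1,\alpha_2,\alpha_3\ge 1$ — a contradiction. So your elementary moves must be composed in chains through intermediate lists with cleverly chosen pre-paddings (and naive choices produce paddings with zero entries, which regularity as defined does not cover). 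Whether such chains always exist is precisely the unproven claim, and nothing in the proposal rules out that the bridge-generated relations are strictly weaker than what Condorcet Comparison provides through non-bridge edges.
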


Let us start with the key lemma.
Intuitively, this lemma states that if we have a list of distances $a = (a_1,\dots,a_k)$ and two values $a_i,a_j>1$ ($i,j>1$), then we can decrease them by one and increase by one values of $a_{i-1}$ and $a_{j+1}$ without changing the value of the $f$ function.
Similarly, if $a_i > 2$, then we can decrease it by two and increase by one values $a_{i-1}$ and $a_{i+1}$.
An example of $n$-lists of distances $a,b$ that satisfies the lemma assumptions are $a = (3,1,2,4,2)$ and $b = (3,2,1,3,3)$ with $i=3$ and $j=4$.

\begin{lemma}\label{lemma:closeness_shift}
If a regular distance-based centrality based on function $f$ satisfies Condorcet Comparison, then for every two $n$-lists of distances $a = (a_1, \dots, a_k)$, $b = (b_1,\dots,b_m)$ such that there exist $2 \le i \le j \le k$ satisfying:
\[ b_l = a_l - [l=i] - [l=j] + [l \in \{i-1, j+1\}] \]
for every $l \in \{1,\dots,k\}$ it holds $f(a)=f(b)$. Here, we write $[\varphi] = 1$ if $\varphi$ is true, and $[\varphi]=0$, otherwise.
\end{lemma}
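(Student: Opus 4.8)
The plan is to derive $f(a)=f(b)$ from a single application of Condorcet Comparison together with regularity. Concretely, I will construct one graph $G$ containing an edge $\{u,w\}$ whose endpoints have lists of distances $A(u)=a+c$ and $A(w)=b+c$ for a conveniently chosen common list $c$, and in which $u \sim w$. Condorcet Comparison applied to $\{u,w\}$ then forces $F_u(G)=F_w(G)$, i.e. $f(a+c)=f(b+c)$; being a two-sided equality, this lets me invoke regularity in both directions of its ``$\Leftrightarrow$'' to conclude $f(a)=f(b)$. The reason to expect such a graph to exist is that the prescribed move keeps both the total count $n$ and the sum $S$ unchanged (the index shifts contribute $-i-j+(i-1)+(j+1)=0$); read along the edge $\{u,w\}$, it amounts to sending exactly one extra node to $w$'s side and one to $u$'s side, which is precisely the balance needed to make $u\sim w$.

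For the construction, classify each node $x\neq u,w$ by the sign of $d(x,u)-d(x,w)$, which lies in $\{-1,0,1\}$ since $\{u,w\}\in E$: call $x$ \emph{$u$-sided}, \emph{neutral}, or \emph{$w$-sided}. I attach three kinds of gadgets to the edge $\{u,w\}$. First, a \emph{neutral backbone}: $a_1$ common neighbours of $u$ and $w$, one of which roots a tree placing $a_\delta$ nodes at distance $\delta$ from both endpoints for every $\delta\le k$ (possible because $a$, being a genuine list of distances, has no internal zeros, so each level has a parent); these contribute equally to $A(u)$ and $A(w)$. Second and third, four pendant paths that carry the asymmetry: two hung on $u$ of lengths $j$ and $i-2$, and two hung on $w$ of lengths $i-1$ and $j-1$ (a length-$0$ path being simply absent, which covers $i=2$; when $i=j$ the two $w$-paths merely share the length $i-1$). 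A pendant path on $u$ of length $L$ produces $u$-sided nodes at $u$-distances $1,\dots,L$ and $w$-distances $2,\dots,L+1$, and symmetrically for $w$.

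It then remains to verify three points. (a) \emph{Correct difference}: summing per-path contributions, a $u$-path of length $L$ adds $+1$ at distance $1$ and $-1$ at distance $L+1$ to $A(u)-A(w)$, a $w$-path of length $M$ adds $-1$ at $1$ and $+1$ at $M+1$, while the endpoints $w,u$ cancel at distance $1$; with the chosen lengths this telescopes to exactly $-1,+1$ at positions $i-1,i$ and $+1,-1$ at $j,j+1$, i.e. $A(u)-A(w)=a-b$. (b) \emph{Balanced sides}: the $u$-sided nodes number $j+(i-2)$ and the $w$-sided nodes number $(i-1)+(j-1)$, both equal to $i+j-2$, so adding each endpoint's own vote gives $Net(u,w)=Net(w,u)$ and hence $u\sim w$. (c) \emph{Legitimate offset}: taking $c$ to be the contribution of the four paths (together with $w$) to $A(u)$ yields $A(u)=a+c$, and by (a) also $A(w)=b+c$; this $c$ has length $j$ with no internal zeros, and since $j\le k=|a|$ and $j<j+1\le|b|$ we get $|c|\le|a|,|b|$, as regularity demands.

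The main obstacle is entirely in pinning down the \emph{exact} distances, i.e. excluding shortcuts: one must check that the pendant paths stay genuinely $u$- or $w$-sided (their only route to the far endpoint passes through $u$ resp. $w$, as they attach nowhere else) and that the neutral backbone stays equidistant (all its paths run through the common neighbour), so that the computed lists $A(u),A(w)$ are correct and no node changes type. A secondary matter requiring care is the bookkeeping in degenerate or overlapping regimes ($i=2$, $i=j$, or $j$ close to $i$), where gadget lengths coincide or vanish; since all contributions are counted additively, the identities in (a) and (b) persist, but this should be noted explicitly.
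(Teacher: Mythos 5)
Your proposal is correct, and it shares the paper's overall skeleton---build a graph in which the two target lists appear, shifted by a common part, as the distance lists of two adjacent head-to-head-tied nodes, apply Condorcet Comparison to that edge, and strip the shift with regularity---but the gadget realizing this plan is genuinely different. In the paper, the asymmetry is produced by a $(2j{+}1)$-cycle through the compared edge $\{u_0,v_0\}$ together with two pendant nodes ($u_j$ on $u_{j-1}$, and $w$ on $v_{i-2}$), which realizes the canonical lists $a^*_l=2+[l=i]+[l=j]$ and $b^*_l=2+[l=i-1]-[l=j+1]$ exactly; arbitrary admissible lists with $\min\{a_l,b_l\}\ge 2$ for $l\le j$ are then realized exactly (not merely up to a shift) by adding nodes adjacent to both $u_{l-1}$ and $v_{l-1}$ and merging an auxiliary graph at $v_j$, so regularity is needed only once, with the all-ones list $c=(1,\dots,1)$, to remove the ``$\ge 2$'' restriction. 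You instead concentrate the asymmetry in four pendant paths (lengths $j$ and $i-2$ at $u$; lengths $i-1$ and $j-1$ at $w$) and realize all of $a$ as a neutral common-neighbour tree, so a single construction covers every admissible pair uniformly, with no canonical-list stage and no case split on small entries---at the price of invoking regularity unconditionally and with a bulkier offset $c$ (the gadget's own contribution, of length $j$) in place of the all-ones list. Your verification obligations are the right ones and they all check out: the telescoping identity $A(u)-A(w)=a-b$, the side count $j+(i-2)=(i-1)+(j-1)$ giving $u\sim w$, the absence of shortcuts because each gadget attaches at a single node (or, for the backbone, at common neighbours equidistant from both endpoints), the bound $|c|=j\le|a|,|b|$ required by regularity, and the degenerate cases $i=2$ and $i=j$. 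The trade-off is mild: the paper's route yields as a by-product that Condorcet Comparison alone, without regularity, forces $f(a)=f(b)$ whenever all entries up to position $j$ are at least $2$, while your route is shorter and more uniform.
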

\begin{proof}
Fix $2 \le i \le j$.
First, we define two lists of distances $a^*= (a^*_1,\dots,a^*_k)$, $b^* = (b^*_1,\dots,b^*_m)$ satisfying the assumption of the lemma such that $|a^*|=j$ and $\min\{a^*_l,b^*_l\} = 2$ for every $l \in \{1,\dots,j\}$.
Both lists are uniquely characterized by these condition: we have $|b^*|=j+1$ and
\begin{itemize}
\item $a_l = 2 + [l=i] + [l=j]$ for every $l \in \{1,\dots,|a|\}$,
\item $b_l = 2 + [l=i-1] - [l=j+1]$ for every $l \in \{1,\dots,|b|\}$.
\end{itemize}
For example, for $i=2$ and $j=4$ we have $a = (2,3,2,3)$ and $b = (3,2,2,2,1)$.
In turn, if $i=j=3$ we have $a = (2,2,4)$ and $b = (2,3,2,1)$.

Now, consider graph $G^* = (V^*,E^*)$ defined as follows:
\begin{itemize}
\item $V^* = \{u_0,\dots,u_j\} \cup \{v_0,\dots,v_j\} \cup \{w\}$
\item $E^* = \{\{u_l,u_{l+1}\}, \{v_l,v_{l+1}\}: l \in \{0,\dots,j-1\}\} \cup \{\{u_0,v_0\},\{u_{j-1},v_j\},\{v_{i-2}, w\}\}$.
\end{itemize}
See Figure~\ref{fig:closeness_one_left_one_right_appendix} for an illustration.
In this graph we have $A(u_0) = a^*$, $A(v_0) = b^*$ and $u_0 \sim v_0$.
Hence, Condorcet Comparison implies $f(a^*) = f(b^*)$.

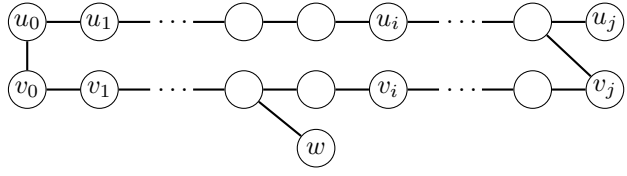
\begin{figure}[t]
\centering
\begin{tikzpicture}[x=8cm,y=6cm] 
  \tikzset{     
    e4c node/.style={circle,draw,minimum size=0.50cm,inner sep=0}, 
    e4c edge/.style={sloped,above,font=\footnotesize}
  }
  \node[e4c node          ] (b1) at (0.00, -0.15) {$v_0$}; 
  \node[e4c node          ] (b2) at (0.12, -0.15) {$v_1$}; 
  \node[e4c node,draw=none] (b3) at (0.24, -0.15) {\ $\dots$}; 
  \node[e4c node          ] (b4) at (0.36, -0.15) {}; 
  \node[e4c node          ] (b5) at (0.48, -0.15) {}; 
  \node[e4c node          ] (bb) at (0.48, -0.28) {$w$}; 
  \node[e4c node          ] (b6) at (0.60, -0.15) {$v_i$}; 
  \node[e4c node,draw=none] (b7) at (0.72, -0.15) {\ $\dots$}; 
  \node[e4c node          ] (b8) at (0.84, -0.15) {}; 
  \node[e4c node          ] (b9) at (0.96, -0.15) {$v_j$}; 


  \node[e4c node          ] (t1) at (0.00, 0.00) {$u_0$}; 
  \node[e4c node          ] (t2) at (0.12, 0.00) {$u_1$}; 
  \node[e4c node,draw=none] (t3) at (0.24, 0.00) {\ $\dots$}; 
  \node[e4c node          ] (t4) at (0.36, 0.00) {}; 
  \node[e4c node          ] (t5) at (0.48, 0.00) {}; 
  \node[e4c node          ] (t6) at (0.60, 0.00) {$u_i$}; 
  \node[e4c node,draw=none] (t7) at (0.72, 0.00) {\ $\dots$}; 
  \node[e4c node          ] (t8) at (0.84, 0.00) {}; 
  \node[e4c node          ] (t9) at (0.96, 0.00) {$u_j$}; 

  \path[draw,thick]
  (b1) edge[e4c edge] (b2)
  (b2) edge[e4c edge] (b3)
  (b3) edge[e4c edge] (b4)
  (b4) edge[e4c edge] (b5)
  (b5) edge[e4c edge] (b6)
  (b6) edge[e4c edge] (b7)
  (b7) edge[e4c edge] (b8)
  (b8) edge[e4c edge] (b9)

  (t1) edge[e4c edge] (t2)
  (t2) edge[e4c edge] (t3)
  (t3) edge[e4c edge] (t4)
  (t4) edge[e4c edge] (t5)
  (t5) edge[e4c edge] (t6)
  (t6) edge[e4c edge] (t7)
  (t7) edge[e4c edge] (t8)
  (t8) edge[e4c edge] (t9)

  (b1) edge[e4c edge] (t1)
  (b4) edge[e4c edge] (bb)
  (t8) edge[e4c edge] (b9)
  ;
  
\end{tikzpicture}
\caption{Graph $G^*$ from the proof of Lemma~\ref{lemma:closeness_shift}.
We have $u_0 \sim v_0$,\break $A(u_0) = (2,\dots,2,3,\dots,3)$ and $A(v_0) = (2,\dots,3,2,\dots,2,1)$.}
\label{fig:closeness_one_left_one_right_appendix}
\end{figure}

Now, let us discuss how to generalize this result on all possible $n$-lists of distances.
Take two arbitrary $n$-lists of distances $a = (a_1,\dots,a_k)$, $b = (b_1,\dots,b_m)$ that satisfy the lemma restrictions, i.e.,
\[ b_l = a_l - [l=i] - [l=j] + [l \in \{i-1,j+1\}] \]
for every $l \in \{1,\dots,k\}$.
Assume $\min \{a_l,b_l\} \ge 2$ for every $l \in \{1,\dots,j\}$.
Since $b_l - a_l = b^*_l-a^*_l$, we have $a_l-a^*_l = b_l-b^*_l \ge 0$ for every $l \in \{1,\dots,k\}$.
Hence, we extend graph $G^*$ as follows:
\begin{itemize}
\item For every $l \in \{1,\dots,j\}$ we add $a_l-a^*_l$ nodes connected by two edges with $u_{l-1}$ and $v_{l-1}$.
In this way, these nodes are at distance $l$ from both $u_0$ and $v_0$.
\item To accommodate for $l > j$, take any graph $G' = (V',E')$ with $s \in V'$ such that $A(s) = (a_{j+1}, \dots, a_k)$.
Now, we add this graph to $G^*$ and merge node $s$ from $G'$ with node $v_j$ from $G^*$.
In this way, for every $l > j$, we add $a_l$ nodes at distance $l$ from both $u_0$ and $v_0$.
\end{itemize}
See Figure~\ref{fig:closeness_one_left_one_right_appendix_2} for an example.
Now, in the resulting graph, we have $A(u_0) = a$ and $A(v_0) = b$.
Hence, Condorcet Comparison implies $f(a) = f(b)$.

\begin{figure}[b]
\centering
\begin{tikzpicture}[x=7cm,y=6cm] 
  \tikzset{     
    e4c node/.style={circle,draw,minimum size=0.50cm,inner sep=0}, 
    e4c edge/.style={sloped,above,font=\footnotesize}
  }
  
  \def\y{0.05}
  
  \node[e4c node          ] (t0) at (0.00, 0.00) {$u_0$}; 
  \node[e4c node          ] (t1) at (0.12, 0.00) {$u_1$}; 
  \node[e4c node          ] (t2) at (0.24, 0.00) {$u_2$}; 
  \node[e4c node          ] (t3) at (0.36, 0.00) {$u_3$}; 
  \node[e4c node          ] (t4) at (0.48, 0.00) {$u_4$}; 

  \node[e4c node          ] (b0) at (0.00, -6*\y) {$v_0$}; 
  \node[e4c node          ] (b1) at (0.12, -6*\y) {$v_1$}; 
  \node[e4c node          ] (b2) at (0.24, -6*\y) {$v_2$}; 
  \node[e4c node          ] (b3) at (0.36, -6*\y) {$v_3$}; 
  \node[e4c node          ] (b4) at (0.48, -6*\y) {$v_4$}; 
  
  \node[e4c node          ] (bb) at (0.12, -6*\y-0.12) {$w$}; 
  
  \node[e4c node          ] (m1) at (0.12, -3*\y) {};
  \node[e4c node          ] (m2) at (0.24, -2*\y) {};
  \node[e4c node          ] (m3) at (0.24, -4*\y) {};
  \node[e4c node          ] (m4) at (0.60, -2*\y) {};
  \node[e4c node          ] (m5) at (0.60, -4*\y) {};

  \path[draw,thick]
  (b0) edge[e4c edge] (b1)
  (b1) edge[e4c edge] (b2)
  (b2) edge[e4c edge] (b3)
  (b3) edge[e4c edge] (b4)

  (t0) edge[e4c edge] (t1)
  (t1) edge[e4c edge] (t2)
  (t2) edge[e4c edge] (t3)
  (t3) edge[e4c edge] (t4)

  (b0) edge[e4c edge] (t0)
  (b0) edge[e4c edge] (bb)
  (t3) edge[e4c edge] (b4)

  (t0) edge[e4c edge] (m1)
  (b0) edge[e4c edge] (m1)
  
  (t1) edge[e4c edge] (m2)
  (b1) edge[e4c edge] (m2)
  
  (t1) edge[e4c edge] (m3)
  (b1) edge[e4c edge] (m3)
  
  (b4) edge[e4c edge] (m4)
  (b4) edge[e4c edge] (m5)
  ;
  
\end{tikzpicture}
\caption{Modified graph $G^*$ from the proof of Lemma~\ref{lemma:closeness_shift}. We have $u_0 \sim v_0$, $A(u_0) = (3,5,2,3,2)$ and $A(v_0) = (4,4,2,2,3)$.}
\label{fig:closeness_one_left_one_right_appendix_2}
\end{figure}
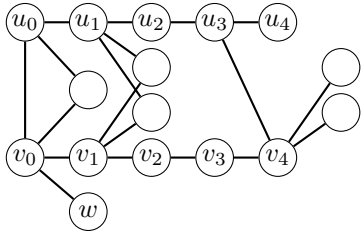

Finally, assume the condition $\min \{a_l,b_l\} \ge 2$ does not hold for every $l \in \{1,\dots,j\}$.
Here, we use regularity of function $f$.
Let $c = (1,\dots,1)$ be a $j$-list of distances.
Now, $a+c$ and $b+c$ are two $(n+j)$-lists of distances that satisfy the lemma assumptions and also $\min \{(a+c)_l,(b+c)_l\} \ge 2$ for every $l \in \{1,\dots,j\}$.
Hence, $f(a+c) = f(b+c)$ which from regularity of $f$ implies $f(a) = f(b)$.
\end{proof}

Let us define the \emph{weight} of a list of distances $a = (a_1,\dots,a_k)$ as follows:
\[ \omega(a) = \sum_{i=2}^{|a|} (a_i-1). \]
Clearly, weight of a list is non-negative.
Now, we will show that for every $n$ and every possible sum $S$ (every possible sum of an $n$-list of distances) there exists exactly one $n$-list with weight $0$ or $1$; we will denote it by $\bot_{S,n}$.

\begin{lemma}\label{lemma:closeness_minimal_exists}
For every $n \le S \le \frac{n(n+1)}{2}$ there exists a unique $n$-list of distances $\bot_{S,n}$ with the sum $S(\bot_{S,n}) = S$ and $\omega(\bot_{S,n}) \le 1$.
\end{lemma}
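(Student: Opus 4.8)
The plan is to give a direct combinatorial bijection between the $n$-lists of weight at most $1$ and the integers in $[n, \frac{n(n+1)}{2}]$, rather than argue existence and uniqueness separately by abstract means. The first step is to note that $a_i \ge 1$ for all $i$, so $\omega(a) = \sum_{i\ge 2}(a_i-1)$ is a sum of non-negative terms; hence $\omega(a)\le 1$ forces $a_2=\dots=a_k=1$ with at most one exception equal to $2$. I would therefore split into exactly two families indexed by the length $k$: the \emph{weight-$0$} lists $a = (n-k+1,1,\dots,1)$, valid precisely when $1\le k\le n$, and the \emph{weight-$1$} lists $(n-k,1,\dots,1,2,1,\dots,1)$ with the single $2$ in position $j\in\{2,\dots,k\}$, valid precisely when $2\le k \le n-1$.

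\textbf{Computing the sums.} Next I would record the sum of each family. A short computation gives, for the weight-$0$ list of length $k$,
\[ S = (n-k+1) + \Big(\tfrac{k(k+1)}{2}-1\Big) = n + \binom{k}{2}, \]
and for the weight-$1$ list of length $k$ with the $2$ at position $j$,
\[ S = (n-k) + \Big(\tfrac{k(k+1)}{2}-1\Big) + j = n + \binom{k}{2} + (j-1), \]
where $j-1$ ranges over $\{1,\dots,k-1\}$. Writing $m = S-n$ and $r$ for the offset, this says: the weight-$0$ list of length $k$ realizes $m = \binom{k}{2}$, and the weight-$1$ lists of length $k$ realize $m = \binom{k}{2}+r$ for $r=1,\dots,k-1$. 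Since $\binom{k}{2}+ (k-1) = \binom{k+1}{2}-1$, the blocks for consecutive $k$ are the intervals $[\binom{k}{2}, \binom{k+1}{2}-1]$, which tile $[0,\binom{n}{2}]$ exactly, with no gaps and no overlaps.

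\textbf{Existence and uniqueness.} For existence, given $S$ with $n\le S\le\frac{n(n+1)}{2}$, set $m = S-n\in[0,\binom{n}{2}]$, let $k$ be the unique integer with $\binom{k}{2}\le m < \binom{k+1}{2}$, and set $r = m-\binom{k}{2}\in\{0,\dots,k-1\}$. If $r=0$ take the weight-$0$ list of length $k$; if $r\ge 1$ take the weight-$1$ list of length $k$ with the $2$ at position $j=r+1$. For uniqueness, the same inequalities show that from $m$ the pair $(k,r)$ is recovered uniquely, and $(k,r)$ determines the list, so the map from weight-$\le 1$ lists to their sums is injective; combined with existence it is a bijection onto $\{n,\dots,\frac{n(n+1)}{2}\}$.

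\textbf{Main obstacle.} The only genuine care needed is in checking the validity constraint $a_1\ge 1$ at the boundary of the construction, i.e.\ that $k\le n$ whenever $r=0$ and $k\le n-1$ whenever $r\ge 1$. I would verify this directly: $\binom{k}{2}\le m\le \binom{n}{2}$ gives $k\le n$; and if $k=n$ then $r\ge 1$ would force $m\ge\binom{n}{2}+1>\binom{n}{2}$, a contradiction, so $r\ge 1$ implies $k\le n-1$. This boundary bookkeeping is routine rather than conceptual, so once the sum formulas and the triangular-number tiling are in place the lemma follows immediately.
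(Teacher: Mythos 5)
Your proof is correct and follows essentially the same route as the paper's: both reduce a weight-$\le 1$ list to the formula $S = n + \binom{k}{2} + (j-1)$ and then recover $(k,j)$ uniquely from $S-n$ via the triangular-number decomposition. Your write-up is somewhat more explicit than the paper's (which compresses the tiling argument into choosing the minimal $k$ with $\sum_{i=1}^{k} i > S-n$), and your boundary check that $a_1 \ge 1$ is a welcome detail the paper leaves implicit.
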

\begin{proof}
If $\omega(a) \le 1$, then this means that there exists at most one $j \in \{2,\dots,k\}$ such that $a_j > 1$.
If such $j$ does not exist, assume $j=1$.
We get:
\[ S = \sum_{i=1}^k ia_i = n+(1+\dots+k-1)+(j-1). \]
Since $j \le k$ in this formula, we get that $k$ is uniquely defined as the minimal natural number s.t. $(\sum_{i=1}^k i) > (S-n)$ and $j = (S-n) - (\sum_{i=1}^{k-1} i) + 1$.
\end{proof}
For example, for $S = 28$ and $n=11$ we have $k=6$ and $j=3$, hence $\bot_{28,11} = (5,1,2,1,1,1)$.

Based on Lemma~\ref{lemma:closeness_shift} we can show for every $n$-list of distances $a$ it holds $f(a) = f(\bot_{S(a),n})$.

\begin{lemma}\label{lemma:closeness_equal_minimal}
If a regular distance-based centrality based on function $f$ satisfies Condorcet Comparison, then for every $n$-list of distances $a = (a_1, \dots, a_k)$ it holds:
\[ f(a) = f(\bot_{S(a),n}). \]
\end{lemma}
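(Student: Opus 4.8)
The plan is to read the transformation of Lemma~\ref{lemma:closeness_shift} as a rewriting rule on $n$-lists and to show that, started from any $a$, it must terminate at $\bot_{S(a),n}$. First I would record what a single shift preserves. It decreases $a_i$ and $a_j$ by one and increases $a_{i-1}$ and $a_{j+1}$ by one, so the total count stays $n$, and the change in the sum is $-i-j+(i-1)+(j+1)=0$, so $S(\cdot)$ is preserved as well. Since every entry that is decremented is required to stay positive, each admissible shift again produces a valid $n$-list of distances, and by Lemma~\ref{lemma:closeness_shift} the value of $f$ is unchanged. Hence along any chain of shifts the quantities $n$, $S$ and $f(\cdot)$ are all constant.

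Next I would characterise the lists on which no shift is available. An admissible shift requires indices $2\le i\le j$ with $a_i,a_j\ge 2$ when $i<j$, or $a_i\ge 3$ when $i=j$. I claim such indices exist exactly when $\omega(a)\ge 2$. If $\omega(a)=\sum_{i\ge 2}(a_i-1)\ge 2$, then either some $a_i\ge 3$ with $i\ge 2$, in which case the pair $i=j$ works, or every $a_i$ with $i\ge 2$ is at most $2$ and then at least two of them equal $2$, giving a pair $i<j$. Conversely, $\omega(a)\le 1$ forces at most one index $i\ge 2$ with $a_i\ge 2$ and that entry to equal $2$, so no admissible pair exists. Thus the lists with no available shift are precisely those with $\omega(a)\le 1$, which by Lemma~\ref{lemma:closeness_minimal_exists} are exactly the lists $\bot_{S,n}$.

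It remains to rule out infinite rewriting, and here I would use a strictly monotone potential. Let $\Psi(a)=\sum_i i^2 a_i$. A single shift changes $\Psi$ by $\big((j+1)^2-j^2\big)-\big(i^2-(i-1)^2\big)=2(j-i+1)\ge 2$, so $\Psi$ strictly increases at every step (this is just strict convexity of $i\mapsto i^2$ together with $j\ge i$). Because all entries are positive and $\sum_i i a_i=S$, the length is bounded by $S$, so there are only finitely many $n$-lists with sum $S$; hence $\Psi$ is bounded above and no infinite ascending chain is possible.

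Combining these, starting from $a$ I repeatedly apply a shift: each step keeps $n$, $S$ and $f(\cdot)$ fixed and strictly raises $\Psi$, so after finitely many steps I reach a list $a'$ admitting no further shift. By the characterisation, $\omega(a')\le 1$, and since $S(a')=S(a)$, Lemma~\ref{lemma:closeness_minimal_exists} identifies $a'$ as the unique list $\bot_{S(a),n}$. The chain of equalities supplied by Lemma~\ref{lemma:closeness_shift} then gives $f(a)=f(a')=f(\bot_{S(a),n})$. I expect the main obstacle to be exactly this termination bookkeeping: a single shift need not decrease $\omega$ (for instance $(1,1,3,1)\mapsto(1,2,1,2)$ keeps weight $2$), so a naive induction on weight fails and one really must argue both that the only dead ends are the weight-$\le 1$ lists and that the process cannot cycle.
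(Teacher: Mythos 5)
Your proof is correct, and it replaces the paper's termination bookkeeping with a genuinely different argument. Both proofs share the same skeleton: repeatedly apply Lemma~\ref{lemma:closeness_shift}, which preserves $n$, $S(\cdot)$ and $f(\cdot)$, until reaching the unique weight-$\le 1$ list $\bot_{S(a),n}$ guaranteed by Lemma~\ref{lemma:closeness_minimal_exists}. They diverge on why the process terminates there. The paper does not apply shifts arbitrarily: it performs a choreographed macro-step of $m=\min\{i-1,k-j+1\}$ consecutive shifts that pushes the two excess units outward until one of them lands on position $1$ or on a fresh last position $k+1$, where it no longer counts toward the weight; each macro-step therefore strictly decreases $\omega$, and the proof is an induction on $\omega$. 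You instead allow any admissible shift, characterize the dead ends as exactly the lists with $\omega\le 1$, and rule out infinite rewriting with the potential $\Psi(a)=\sum_i i^2a_i$, which strictly increases by $2(j-i+1)\ge 2$ per shift and is bounded above because there are only finitely many $n$-lists with sum $S$. Your observation that a single shift need not decrease the weight (e.g., $(1,1,3,1)\mapsto(1,2,1,2)$) is exactly right and is precisely why the paper needs the coordinated macro-step; your potential-function argument sidesteps that design entirely and shows that \emph{any} maximal sequence of shifts works, at the price of being non-constructive about the sequence, whereas the paper's version yields an explicit shift schedule and an implicit bound on its length.
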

\begin{proof}
We proceed by induction on the weight of a list of distances.
Take an $n$-list of distances $a = (a_1,\dots,a_k)$ and assume $\omega(a) > 1$ which means there exist $a_i,a_j > 1$ such that $i < j$ or $a_i > 2$ (in such a case, let $j=i$).

Let $m = \min \{i-1, k-j+1\}$.
If $i-1 \ge k-j+1$, then by using Lemma~\ref{lemma:closeness_shift} $m$ times we get $f(a) = f(b)$ for $b = (b_1,b_2,\dots,b_k,1)$ defined as follows:
\[ \forall_{l \in \{1,\dots,k\}} \left(b_l = a_l - [l=i] - [l=j] + [l = i-m] \right) \]
In turn, if $i-1 < k-j+1$, then using Lemma~\ref{lemma:closeness_shift} $m$ times we get $f(a) = f(b)$ for $b = (a_1+1,b_2,\dots,b_k)$ defined as follows:
\[ \forall_{l \in \{2,\dots,k\}} \left(b_l = a_l - [l=i] - [l=j] + [l = j+m] \right) \]
In both cases, we get that $\omega(b) < \omega(a)$.

Eventually, we get the list of distances with the weight smaller or equal to one. Lemma~\ref{lemma:closeness_minimal_exists} implies this list is $\bot_{S(a),n}$.
\end{proof}

To give an example, for $a = (4,1,2,4)$, if we always choose the smallest possible $i$ and the largest possible $j$ we get $(4,2,1,3,1)$, $(5,1,1,2,2)$ and eventually\break $\bot_{28,11} = (5,1,2,1,1,1)$.

The last ingredient of the proof is lemma that states $\bot_{S,n} < \bot_{S+1,n}$ for every sum $S$.
\begin{lemma}\label{lemma:closeness_two_minimal}
If a regular distance-centrality based on function $f$ satisfies Condorcet Comparison, then for every $n \le S < \frac{n(n+1)}{2}$ it holds $f(\bot_{S,n}) > f(\bot_{S+1,n})$.
\end{lemma}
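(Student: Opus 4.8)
We need to show $f(\bot_{S,n}) > f(\bot_{S+1,n})$. Intuitively, $\bot_{S,n}$ has smaller sum, so should be "more central" (higher $f$ value) — this mimics Closeness, where smaller sum of distances = higher centrality.

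**Structure of $\bot$ lists:**

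Recall $\bot_{S,n}$ has weight $\le 1$: at most one index $j \ge 2$ with $a_j > 1$, rest are $1$ (except $a_1$ which absorbs the remainder). From Lemma on existence: $k$ minimal with $\sum_{i=1}^k i > S-n$, and $j = (S-n) - \sum_{i=1}^{k-1} i + 1$.

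**The approach:**

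We want to relate $\bot_{S,n}$ and $\bot_{S+1,n}$. The natural idea: show that from $\bot_{S,n}$ we can reach $\bot_{S+1,n}$ by some elementary operation, and use Condorcet Comparison to deduce the strict inequality.

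Since both have weight $\le 1$, let me think about how they differ. As $S$ increases by 1, either $j$ increases by 1 (if $j < k$), or we roll over to a new length. So $\bot_{S,n}$ and $\bot_{S+1,n}$ are "adjacent" lists differing minimally.

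**Connecting to Condorcet Comparison via a graph:**

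The key technique (as in Lemma on shift) is to build a graph where two adjacent nodes $u,v$ have $A(u) = \bot_{S,n}$ and $A(v) = \bot_{S+1,n}$, with $u \succ v$ (strict), forcing $f(\bot_{S,n}) > f(\bot_{S+1,n})$.

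Let me verify the reasoning about $Net$: For adjacent $u,v$, from Theorem proof, $C_v^{-1} - C_u^{-1} = Net(u,v) - Net(v,u) = S(A(v)) - S(A(u))$. Wait — that means the sum difference equals the net vote difference! Since $S(\bot_{S+1,n}) - S(\bot_{S,n}) = 1$, we'd get $Net(u,v) - Net(v,u) = 1 > 0$, i.e. $u \succ v$.

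**My proof plan:**

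First I would establish that it suffices to construct a single graph $G$ with an edge $\{u,v\}$ such that $A(u) = \bot_{S,n}$, $A(v) = \bot_{S+1,n}$, and $u \succ v$. Then Condorcet Comparison gives $f(\bot_{S,n}) = F_u(G) > F_v(G) = f(\bot_{S+1,n})$ directly.

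Next, I would recall that for adjacent nodes, every other node $w$ has $|d(w,u) - d(w,v)| \le 1$, so $Net(u,v) - Net(v,u) = S(A(v)) - S(A(u)) = (S+1) - S = 1$, automatically giving $u \succ v$. So I do NOT need to verify the Condorcet relation by hand — it follows from the sum difference being exactly $1$. The real work is just constructing the graph realizing the two lists on adjacent nodes.

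The main obstacle is the explicit graph construction. I would build it analogously to Figure~\ref{fig:closeness_one_left_one_right_appendix}: two parallel paths $u = u_0, u_1, \dots$ and $v = v_0, v_1, \dots$ joined at the start by the edge $\{u_0, v_0\}$, with symmetric "pendant" nodes attached at each level $l$ so that both $u_0$ and $v_0$ see them at distance $l$. The asymmetry encoding $\bot_{S,n}$ versus $\bot_{S+1,n}$ — the single extra unit of distance — is introduced by one node (or one cross-link) placed so it lies one step farther from $v_0$ than from $u_0$. Because both target lists have weight $\le 1$, their profiles are nearly identical ($1$ at most positions, one bump at position $j$), so the two paths have almost the same structure and the gadget needed to shift exactly one node's distance by one is a small, local modification.

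**Handling the length change:**

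The subtle case is when incrementing $S$ changes the list \emph{length} (rollover from $j = k$ to a new $\bot_{S+1,n}$ of length $k+1$). Here the plan is either to treat it as a separate explicit construction, or — more cleanly — to invoke Lemma~\ref{lemma:closeness_equal_minimal}: since $f(a) = f(\bot_{S(a),n})$ for \emph{every} $n$-list $a$, I may replace $\bot_{S+1,n}$ by \emph{any} convenient $n$-list of sum $S+1$ (and $\bot_{S,n}$ by any list of sum $S$) when building the graph. This freedom lets me always realize two adjacent nodes whose lists differ by exactly one unit of total distance, sidestepping the rollover bookkeeping. I then apply Lemma~\ref{lemma:closeness_equal_minimal} at the end to transfer the strict inequality back to the canonical $\bot$ lists.
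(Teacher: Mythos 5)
Your high-level strategy is the same as the paper's: build a graph in which two adjacent nodes realize lists with sums $S$ and $S+1$, note that $u \succ v$ then comes for free (since for adjacent nodes $Net(u,v)-Net(v,u) = C_v^{-1}(G)-C_u^{-1}(G)$ equals the difference of the sums, exactly as in Theorem~\ref{theorem:graphs_closeness_satisfies}), and apply Condorcet Comparison. That observation is correct. The genuine gap is in the step you dismiss as ``rollover bookkeeping'': your plan needs two \emph{$n$-lists} with sums $S$ and $S+1$ realized on adjacent nodes, and for $S$ near the maximum no such pair exists in any graph, so the freedom granted by Lemma~\ref{lemma:closeness_equal_minimal} cannot save you. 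Concretely, take $S+1 = \frac{n(n+1)}{2}$, which the lemma must cover since it only assumes $S < \frac{n(n+1)}{2}$. The unique $n$-list with sum $\frac{n(n+1)}{2}$ is $(1,1,\dots,1)$, realizable only by an endpoint $v$ of a path on $n+1$ nodes; the unique neighbor of $v$ then has sum $\frac{n(n-1)}{2}+1 \neq S$ for every $n \ge 3$. So there is literally no graph containing adjacent nodes whose $n$-lists have sums $S$ and $S+1$: Lemma~\ref{lemma:closeness_equal_minimal} only lets you trade a list for another $n$-list of the \emph{same} sum, and here there is nothing to trade to (the same obstruction occurs at $S+1=\frac{n(n+1)}{2}-1$, whose unique list $(1,\dots,1,2)$ has $a_1=1$ and hence a unique neighbor with a forced sum). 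A further problem arises even in mid-range cases: the gadget you describe (parallel paths with symmetric pendants) needs at least two nodes at every distance up to the asymmetry position, while the $\bot$ lists have interior entries equal to $1$; e.g.\ $\bot_{28,11}=(5,1,2,1,1,1)$ and $\bot_{29,11}=(5,1,1,2,1,1)$ admit no realization on adjacent nodes at all.

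The idea you are missing is the paper's direct use of regularity---tellingly, your argument never invokes it, although it is a hypothesis of the lemma and it is needed precisely here. The paper pads both lists with $c=(1,\dots,1)$, a $j$-list where $j$ is the position of the exceptional entry of $\bot_{S,n}$: the padded lists $\bot_{S,n}+c$ and $\bot_{S+1,n}+c$ are $(n+j)$-lists whose first $j$ entries are all at least $2$, and these \emph{are} realizable on adjacent nodes $u_0,v_0$ of a graph on $n+j+1$ nodes (two parallel paths joined by the edge $\{u_0,v_0\}$, shared neighbors $z_1,\dots,z_{a_1-1}$, a cross edge, and one node $w$ at distance $j$ from $u_0$ and $j+1$ from $v_0$); this construction works uniformly, including the rollover and the boundary cases, exactly because the graph is allowed to have more than $n+1$ nodes. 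Condorcet Comparison then gives $f(\bot_{S,n}+c)>f(\bot_{S+1,n}+c)$, and regularity strips off $c$ to conclude $f(\bot_{S,n})>f(\bot_{S+1,n})$. Passing to a larger graph by padding and returning by regularity is not a convenience but the essential move that circumvents the nonexistence your construction runs into.
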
 
\begin{proof}
Let $\bot_{S,n} = (a_1,\dots,a_k)$.
We know that $\bot_{S,n}$ is of the form $(m,1,\dots,1,2,1,\dots,1)$, or $(m,1,\dots,1)$. 
Let $j$ be the index such that $a_j > 1$ for $j > 1$ (or $j=1$ if such index does not exist).
Note that lists $\bot_{S,n}$ and $\bot_{S+1,n}$ may differ only on position $j$ and $j+1$.

Let $c = (1,\dots,1)$ be a $j$-list of distances
We will now create a graph in which two adjacent nodes have lists of distances $\bot_{S,n}+c$ and $\bot_{S+1,n}+c$.

To this end, consider graph $G^* = (V^*,E^*)$ defined as follows:
\begin{itemize}
\item $V^* = \{u_0,\dots,u_k\} \cup \{v_0,\dots,v_j\} \cup \{w\} \cup \{z_1,\dots,z_{a_1-1}\}$
\item $E^* = \{\{u_l,u_{l+1}\}, \{v_l,v_{l+1}\}: l \in \{0,\dots,j-1\}\} 
\cup \{\{y_l,y_{l+1}\} : l \in \{1,\dots,j-k-1\}\}
\cup \{\{z_l,u_0\}, \{z_l,v_0\} : l \in \{1,\dots,a_1-1\}\}
\cup \{\{u_0,v_0\},\{u_{j-1},v_j\},\{v_{i-2}, w\}\}$.
\end{itemize}
See Figure~\ref{fig:closeness_two_with_minimal_appendix} for an illustration.
Node $u_0$ has $a_1+1$ nodes at distance 1 (all $z$'s, $v_0$ and $v_1$), $3$ nodes at distance $j$ ($u_j$, $w$ and $v_{j-1}$), $2$ nodes at distance $1 < l < j$ ($u_l$ and $v_{l-1}$) and $1$ nodes at distance $l > j$ ($u_l$).
Hence, we have $A(u_0) = \bot_{S,n}+c$.
Now, node $v_0$ has the same list of distances with the only difference in the distance to node $w$.
Hence, $A(v_0) = \bot_{S+1,n}+c$.

\begin{figure}[t]
\centering
\begin{tikzpicture}[x=7cm,y=6cm] 
  \tikzset{     
    e4c node/.style={circle,draw,minimum size=0.50cm,inner sep=0}, 
    e4c edge/.style={sloped,above,font=\footnotesize}
  }

  \node[e4c node          ] (l1) at (-0.12, +0.05) {$z_1$}; 
  \node[e4c node          ] (l2) at (-0.12, -0.075) {$z_2$}; 
  \node[e4c node,draw=none] (l3) at (-0.12, -0.20) {\ $\dots$}; 

  \node[e4c node          ] (bb) at (0.48, +0.075) {$w$}; 

  \node[e4c node          ] (t1) at (0.00, 0.00) {$u_0$}; 
  \node[e4c node          ] (t2) at (0.12, 0.00) {$u_1$}; 
  \node[e4c node,draw=none] (t3) at (0.24, 0.00) {\ $\dots$}; 
  \node[e4c node          ] (t4) at (0.36, 0.00) {}; 

  \node[e4c node          ] (b1) at (0.00, -0.15) {$v_0$}; 
  \node[e4c node          ] (b2) at (0.12, -0.15) {$v_1$}; 
  \node[e4c node,draw=none] (b3) at (0.24, -0.15) {\ $\dots$}; 
  \node[e4c node          ] (b4) at (0.36, -0.15) {}; 
  
  \node[e4c node          ] (t5) at (0.48, -0.075) {$u_j$}; 
  \node[e4c node,draw=none] (t6) at (0.60, -0.075) {\ $\dots$}; 
  \node[e4c node          ] (t7) at (0.72, -0.075) {$u_k$}; 

  \path[draw,thick]
  (t1) edge[e4c edge] (t2)
  (t2) edge[e4c edge] (t3)
  (t3) edge[e4c edge] (t4)
  (t4) edge[e4c edge] (t5)
  (t5) edge[e4c edge] (t6)
  (t6) edge[e4c edge] (t7)

  (b1) edge[e4c edge] (b2)
  (b2) edge[e4c edge] (b3)
  (b3) edge[e4c edge] (b4)

  (b1) edge[e4c edge] (t1)
  (t4) edge[e4c edge] (bb)
  (b4) edge[e4c edge] (t5)
  
  (l1) edge[e4c edge] (b1)
  (l1) edge[e4c edge] (t1)
  (l2) edge[e4c edge] (b1)
  (l2) edge[e4c edge] (t1)
  (l3) edge[e4c edge] (b1)
  (l3) edge[e4c edge] (t1)
  ;
  
\end{tikzpicture}
\caption{Graph $G^*$ from the proof of Lemma~\ref{lemma:closeness_two_minimal}.
We have $u_0 \sim v_0$,\break $A(u_0) = (2,\dots,2,3,\dots,3)$ and $A(v_0) = (2,\dots,3,2,\dots,2,1)$.}
\label{fig:closeness_two_with_minimal_appendix}
\end{figure}
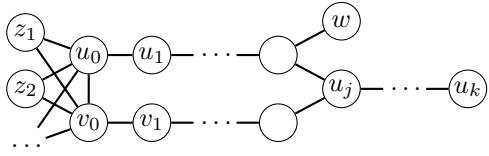

From Condorcet Comparison and the fact that $u \succ v$ we get $f(\bot_{S,n}+c) > f(\bot_{S+1,n}+c)$.
This, based on regularity, implies $f(\bot_{S,n}) > f(\bot_{S+1,n})$.
\end{proof}

We are now ready to present the proof of Theorem~\ref{theorem:closeness_characterization} based on the above lemmas.

\begin{proof}[Proof of Theorem~\ref{theorem:closeness_characterization}]
The ``if'' part follows from Theorem~\ref{theorem:graphs_closeness_satisfies}: Closeness centrality satisfies Condorcet Comparison, hence every centrality $F$ that returns the same ranking also satisfies it: for every graph $G = (V,E)$ and edge $\{u,v\} \in E$ we have:
\[ F_u(G) \ge F_v(G) \Leftrightarrow C_u(G) \ge C_v(G) \Leftrightarrow u \succeq v. \]

Let us focus on the ``only if'' part.
Let $F$ be a regular distance-based centrality based on function $f$ that satisfies Condorcet Comparison.
Take arbitrary graph $G$ and two nodes $u,v$.
We will show that
\[ F_u(G) \ge F_v(G) \Leftrightarrow C_u(G) \ge C_v(G). \]
Since $F$ is a distance-based centrality and $C_u(G) = (S(A(u)))^{-1}$ this is equivalent to showing
\[ f(A(u)) \ge f(A(v)) \Leftrightarrow S(A(u)) \le S(A(v)). \]
Hence, in what follows, we will show that for any two $n$-lists of distances it holds:
\begin{equation}\label{eq:theorem_closeness_char}
f(a) \ge f(b) \Leftrightarrow S(a) \le S(b).
\end{equation}

From Lemma~\ref{lemma:closeness_equal_minimal} we get that $f(a) = f(\bot_{S(a),n})$ and $f(b) = f(\bot_{S(b),n})$.
If $S(a) = S(b)$, then this implies $f(a) = f(b)$ and Equation~\eqref{eq:theorem_closeness_char} is satisfied.

Assume otherwise and without loss of generality let $S(a) < S(b)$.
From Lemma~\ref{lemma:closeness_two_minimal} we get that $f(\bot_{S(a),n}) > f(\bot_{S(a)+1,n}) > \dots > f(\bot_{S(b),n})$ and Equation~\eqref{eq:theorem_closeness_char} is satisfied. 
This concludes the proof.
\end{proof}

\section{Conclusions}
We studied the connection between Closeness centrality and the Condorcet principle. 
We showed that on trees, Closeness centrality, along with its variant Random-Walk Closeness, are the only popular centrality measure that is Condorcet consistent.
In general graphs, no distance-based centrality is Condorcet consistent, but Closeness centrality is the only regular distance-based centrality that satisfies Condorcet Comparison: an axiom that states that out of two adjacent nodes, the one with more nodes closer to it has the higher centrality.

There are several potential future directions of this work.
First of all, it would be interesting to identify sufficient and necessary conditions for a Condorcet winner to exist in general graphs. 
Also, if a Condorcet winner does not exist, then the structure of the \emph{top cycle} can be analyzed.
In particular, our results imply that a Condorcet cycle cannot be a cycle in a graph, as we know that along the path Closeness centrality always decreases.
Furthermore, larger classes of graphs on which Closeness (or RW-Closeness) is Condorcet consistent can be characterized.
Another idea is to modify the model and treat the edges, not the nodes, as voters.
This is inspired by the analysis of RW-Closeness that leans toward parts of the graph with more edges but not necessarily more nodes.

Yet another idea is to study weaker Condorcet criteria such as the Condorcet loser criterion that, in terms of graphs, would mean that a node that is losing to all other nodes in a head-to-head comparison has the highest centrality.
Another weaker criterion could be a local version of Condorcet consistency that states nodes with the highest centrality win in a head-to-head comparison with their neighbors. 
Condorcet Comparison implies that this property is, in fact, satisfied by Closeness centrality.
On top of that, other concepts from social choice theory could be analyzed.
Finally, the setting could be extended to directed graphs.

\section*{Acknowledgments}
This work was created as a part of the project ``Ordinal centrality measures'' at University of Warsaw.

\bibliographystyle{plainnat}
\bibliography{bibliography}

\end{document}